\theoremstyle{definition}
\newtheorem*{que*}{Question}
\newtheorem{theorem}{Theorem}
\newtheorem{cor}{Corollary}
\newtheorem{definition}{Definition}
\newtheorem{lemma}{Lemma}
\newtheorem{lem}{Lemma}
\newtheorem{remark}{Remark}
\newtheorem{problem}{Problem}
\newtheorem{assumption}{Assumption}
\newcommand{\beq}{\begin{equation}}
\newcommand{\eeq}{\end{equation}}
\newcommand{\barr}{\left[\begin{array}}
\newcommand{\earr}{\end{array}\right]}
\newcommand{\bi}{\begin{itemize}}
\newcommand{\ei}{\end{itemize}}
\newcommand{\bnum}{\begin{enumerate}}
\newcommand{\enum}{\end{enumerate}}
\newcommand{\bc}{\begin{center}}
\newcommand{\stabS}{\mathbb{S}}
\begin{document}

\title{Robust stabilization of multiport networks}
\author{\IEEEauthorblockN{Mayuresh~Bakshi\IEEEauthorrefmark{2}~\IEEEmembership{Member,~IEEE,}, \thanks{\IEEEauthorrefmark{2}bmayuresh@ee.iitb.ac.in, Assistant Professor, Dept.\ of Engineering and Applied Sciences, VIIT Pune, India}
Virendra R Sule\IEEEauthorrefmark{1} \thanks{\IEEEauthorrefmark{1}vrs@ee.iitb.ac.in, Professor, Dept.\ of Electrical Engineering, IIT Bombay, India} \\ and Maryam Shojaei Baghini\IEEEauthorrefmark{3},~\IEEEmembership{Senior Member,~IEEE},\thanks{\IEEEauthorrefmark{3}mshoejai@ee.iitb.ac.in, Professor, Dept.\ of Electrical Engineering, IIT Bombay, India} }}

\maketitle

\begin{abstract}

This paper formulates and solves the problem of robust compensation of multiport active network. This is an important engineering problem as networks designed differ in parameter values due to tolerance during manufacture from their actual realizations in chips and hardware. Parameters also undergo changes due to environmental factors. Hence, practical use of networks requires compensation which is only possible by connecting compensating network at the ports. The resulting interconnection is then required to be stable over a range of parameter values. This is called robust compensation. This paper formulates such a problem using an extension of the coprime factorization theory well known in feedback control theory to the situation of multiport network interconnection developed in \cite{msm1} and formulates the robust stabilization problem as an $H_{\infty}$ optimization problem. The port interconnection of networks does not confirm with computation of the function of the interconnected network analogous to that of the feedback interconnection using signal flow graph. Hence the well known stabilization and stability theory of feedback systems cannot be utilized for such a problem. A new formulation of stabilization theory of network interconnection was formulated and developed by the authors in \cite{msm1}. The variations of parameters of the network are used to define a worst case neighborhood of the network in terms of its coprime fractions at the nominal values of parameters. The solution of the optimization problem is then carried out by the standard procedure of converting such a problem to the Nehari optimization problem \cite{fran}. This methodology of solving the robust compensation of multiport networks using feedback control theory is believed to be novel. \end{abstract}

\begin{IEEEkeywords}
Active networks, Coprime factorization, Feedback stabilization, Multiport network connections, Robust stabilization.
\end{IEEEkeywords}

\section{Introduction}
\IEEEPARstart {R}{obust} stabilization problem is one of the most practical problems of engineering. Engineering dynamical systems in practice almost always have uncertainty about values of parameters of their models or also have slow variations. Moreover such models also neglect dynamics at high frequencies causing fluctuation in a nominal model. Robust stabilization is concerned with designing feedback controllers of such plants so that the resultant dynamical system is stable in the entire worst case range of parameters and model fluctuations. Such a problem was definitively solved in the 1980s in feedback control theory by Kimura \cite{kimu}. The factorization approach to systems \cite{vids}\footnote{It is important to realize that the classical feedback control techniques of loop shaping only allowed stability analysis in terms of gain and phase margin with respect to known feedback. These techniques could not solve the problem of existence and design of controllers for given extent of variation in the model. Hence the solution obtained using factorization theory was a major theoretical advance and had great practical relevance.} 
provided systematic approaches to solving robust stabilization problem using the $H_{\infty}$ norm minimization. 


\subsection{Previous work and contribution of this paper}
In this paper we propose another novel application of the factorization approach and the $H_{\infty}$ optimization, to robust stabilization of active analog, multiport electrical circuits by port compensations. Such a solution, to the author's knowledge has never been proposed in circuit design or multiport network theory and constitutes solution of this important engineering problem which is novel for both control theory as well as network theory. However it must be pointed out that the importance of robust stabilization and analysis of robust stability in terms of feedback control theory of VLSI networks was proposed in \cite{waima}. This is paper is of great importance to application of feedback stability theory to VLSI networks but stops short of solving the stabilization (synthesis) problem while solving the robust stability (analysis) problem. Hence our paper can be considered as an advancement of the theory proposed by \cite{waima}. However, there are major technical differences in formulation of the network interconnection model and mathematical format of our paper relative to that in \cite{waima}. Our paper develops the stabilization theory in terms of port model of the interconnected network which does not utilize the feedback signal flow graph model. Moreover, we make use of the stable coprime factorization theory which is well known for its computational advantages over the polynomial coprime factorization theory used in \cite{waima}. Contributions in this paper can be understood as next steps than achieved in \cite{waima} on the issue of robust stability and stabilization problems of VLSI synthesis.   

To appreciate the nature and importance of this problem it is necessary to consider the situation of multiport compensation in networks analogous to feedback control in systems theory. The robust stabilization is then analogous to compensation with robust stability of interconnected network. Active multiport networks have uncertainties of parameters from their design values, slow variations in parameters due to aging, modeling errors due to neglected structures or stray capacitances etc. A mathematical model of a network at the nominal parameter values is usually known and a worst case percentage variation (tolerance) in parameters may be prescribed after manufacture. Even if such a network is designed for prescribed performance specifications, the performance and stability also undergo variation with changes in the physical network due to variations. It is thus an import engineering challenge to compensate the network at the ports by another multiport network. However, unlike in the case of passive networks, such interconnections are not guaranteed to be stable. Hence designing a stable interconnection of linear time invariant (LTI) multiport networks (which includes active networks) is an important problem which we term as the \emph{robust stabilization problem of multiport networks}.  

\subsection{Stabilization in network interconnection a new problem}
Question then arises what new difficulties such a problem posed to direct application of feedback stabilization theory ? The answer lies in the difference in the nature of network interconnection as compared to feedback systems. In multiport networks, the interconnection is in terms of physical voltage and current quantities at the ports. Here algebraic sums of variables occur for physically same type (voltage or current) variables and must follow Kirchhoff's laws at ports. These are not mathematical identifications and summations of variables as in signal flow graphs. Essentially, the connection between plant and controller in feedback systems occurs at two distinct signal points at the input and output. On the other hand port interconnection takes place at the same signal point the input being the source and output being the response of the same source. Due to this physical nature of control by port compensation, the traditional two input two output formulation of closed loop stability cannot be applied to stability of interconnection of networks. Even if one seeks to carry out such a formulation it turns out that the feedback signal flow graph of the interconnected network is extremely inconvenient to determine for multiport networks. Hence it is necessary to pose the stabilization problem of multiport interconnections from a fresh point of view. In 
\cite{msm1} authors realized that the feedback signal flow graph was not necessary to formulate this stabilization problem and proposed an approach using factorization theory to multiport feedback stabilization. A detailed account of this theory is reported in \cite{msm1}. Recently, the Bode sensitivity concept of feedback control \cite{horo,doft} is also extended to multiport compensation in \cite{msm2} as a first successful application of this approach using $H_{\infty}$ approach to sensitivity optimization by port compensation in networks. In the present paper we propose the second important engineering application of this stabilization theory of networks viz. to the robust stabilization by compensation of a network at the ports when the network has prescribed extent of variation of the model in frequency domain.    

\subsection{Background: Stable factorizations and multiport networks}
We shall extensively refer to background of stable factorization approach for LTI systems as in \cite{antm, vids, doft} and network theory from \cite{chdk}. The ring of stable proper rational functions is denoted as $\stabS$. A multiport network has each port indexed and a source of voltage or current type to be connected at each of the ports is fixed. Let the vector of sources defined as functions of time over $[0,\infty)$ at the ports be denoted $u_{s}$ and the vector of responses at the ports of same indices be denoted as $y_{r}$ over the same interval. The driving point function of the network is then the matrix $H(s)$ of rational functions with real coefficients which relates the vectors of Laplace transforms $U_{s}(s)$, $Y_{r}(s)$ of sources $u_{s}$ and responses $y_{r}$ when the network has zero internal initial conditions of its capacitor voltages and inductor currents, as
\beq\label{networkfunction}
Y_{r}(s)=H(s)U_{s}(s)
\eeq
Thus $H(s)$ is always a square matrix. A basic assumption about the matrix function $H(s)$ we shall make is that the network response at the ports matches that obtained from a doubly coprime fractional representation of $H(s)$ as in (\ref{dcfrelationwithTc}) below. Such an assumption is an algebraic equivalent of the dynamical assumption that the network has no unstable hidden modes. Unstable hidden modes for single port networks are discussed in \cite{chdk}. For multiport networks our description using doubly coprime fractions of $H(s)$ is a generalization.

To define the stabilization problem of network interconnection one first needs to define the notion of stability of a multiport network. This definition is reproduced from \cite{msm1} called as \emph{Bounded Source Bounded Response} stability analogous to the bounded input bounded output (BIBO) stability of LTI systems.

\begin{definition}[BSBR stability]
A multiport network with the vector of port sources denoted $u_{s}$ and the vector of port responses (in same order of indices of ports) denoted $y_{r}$ is BSBR stable if, when the initial voltages and currents in the capacitors and inductors inside the network are zero, then for any uniformly bounded sources $u_{s}$ in time $t\geq 0$ the responses $y_{r}$ are also uniformly bounded for $t\geq 0$.
\end{definition}

A well known result from systems theory which applies to networks which do not have unstable hidden modes shall be our basis of mathematical characterization of BSBR stable networks in terms of their network functions. We shall omit a detailed proof of this result which follows from well known LTI systems theory given in the modern texts \cite{vids} while a single port characterization of stability is discussed in \cite[chapter 4]{chdk}.

\begin{lemma}
A multiport network without unstable hidden modes and represented by (\ref{networkfunction}) is BSBR stable iff the network function matrix $H(s)$ has no poles in the closed right half complex plane denoted as RHP. 
\end{lemma}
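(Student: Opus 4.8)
The plan is to establish the equivalence in both directions, treating the matrix network function $H(s)$ entrywise while exploiting the assumption that the network has no unstable hidden modes. The key observation is that BSBR stability, as defined, is precisely the notion of bounded-input bounded-output (BIBO) stability applied to the multi-input multi-output LTI system whose transfer matrix is $H(s)$, once we identify the sources $u_s$ with inputs and responses $y_r$ with outputs. So the proof reduces to invoking the classical scalar BIBO characterization on each entry and assembling the result, then using the hidden-mode hypothesis to tie poles of $H(s)$ to the actual internal dynamics.

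First I would recall the standard time-domain criterion: a scalar proper rational transfer function $h(s)$ is BIBO stable (bounded inputs yield bounded outputs under zero initial conditions) if and only if its impulse response is absolutely integrable on $[0,\infty)$, which for a rational function is equivalent to $h(s)$ having all its poles in the open left half plane, i.e.\ no poles in the closed RHP. For the matrix case, $Y_r(s) = H(s)U_s(s)$ means each output component is a finite sum of products $H_{ij}(s)U_{s,j}(s)$; uniform boundedness of the vector response for all uniformly bounded vector sources holds if and only if every entry $H_{ij}(s)$ is itself BIBO stable, hence if and only if no entry of $H(s)$ has a pole in the closed RHP. Since the poles of the matrix $H(s)$ are exactly the union of poles of its entries, this gives: the system described by $H(s)$ is BSBR stable iff $H(s)$ has no poles in the closed RHP.

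The step I expect to be the main obstacle — and the reason the excerpt flags the no-unstable-hidden-modes assumption — is justifying that boundedness of the external port response genuinely reflects the full internal stability of the network, rather than masking an unstable internal mode that happens not to appear at the ports. This is where the doubly coprime fractional representation of $H(s)$ in (\ref{dcfrelationwithTc}) does the real work: the hypothesis guarantees that every pole of the internal dynamics shows up as a pole of $H(s)$ (no pole-zero cancellation hiding an RHP mode), so that RHP poles of $H(s)$ and genuine internal instability coincide. Under this assumption the external characterization via $H(s)$ is faithful, and the two notions of stability agree.

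For the forward direction I would argue contrapositively: if $H(s)$ has a pole $s_0$ in the closed RHP, then some entry $H_{ij}$ has that pole, and choosing a bounded source that excites it — a sinusoid at the imaginary-axis frequency in the marginal case, or any bounded input in the strictly unstable case — produces an unbounded response, contradicting BSBR stability. For the converse, if $H(s)$ has all poles in the open LHP, each entry has an absolutely integrable impulse response, so the convolution of a uniformly bounded source with an $L^1$ kernel is uniformly bounded, and summing finitely many such terms keeps the response bounded; hence the network is BSBR stable. Since the excerpt explicitly defers the detailed argument to the standard LTI references in \cite{vids} and the single-port treatment in \cite[chapter 4]{chdk}, I would keep the matrix assembly and the $L^1$-convolution estimate brief and lean on those sources for the scalar foundation.
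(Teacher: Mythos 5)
The paper offers no proof of this lemma, explicitly deferring to standard LTI systems theory in \cite{vids} and the single-port treatment in \cite[chapter 4]{chdk}, and your proposal is exactly that standard argument---entrywise reduction of the matrix case, the $L^{1}$ impulse-response convolution bound for sufficiency, a resonance/contrapositive construction for necessity, and the correct reading of the no-unstable-hidden-modes hypothesis as what makes the port-level characterization faithful to the internal dynamics---so it matches the paper's intended route. The only point to repair is your claim that in the strictly unstable case \emph{any} bounded input produces an unbounded response: the zero input (or any input whose transform cancels the pole) does not, so you should instead exhibit a specific source such as a unit step, whose transform $H_{ij}(s)/s$ retains a pole with nonzero residue at the open-RHP pole of $H_{ij}(s)$.
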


\subsection{Source series and source parallel connections of multiport networks}
One of the important aspects of the stabilization problem of multiport networks is that the response of the network is not affected by a compensation unless it is connected at the ports in a specific way. The externally connected network called \emph{compensator} affects the original network only when at least one port of both networks is connected in what we shall describe below as source series form. Hence the stabilization problem and subsequently the robust stabilization problem is meaningful when the networks have ports connected in this fashion. We define two types of connections at a port. Let two networks be $N$ and $N_{c}$ and let $u_{s}$, $u_{cs}$ denote the sources of these networks respectively at an identified port on each network and $y_{r}$, $y_{cr}$ denote the responses of these sources. Following interconnections are possible between the networks only when 1) the source types at any port where interconnection is to be made are the same, i.e. either both are voltage sources or both are current source and 2) the port connections obey the Kirchhoff's law at the cutset of the two port terminals. We assume that the port Kirchhoff's laws are not violated by the interconnection.
\begin{enumerate}
    \item The networks are said to be in \emph{source parallel} connection at the ports if the two sources are equal to a common source $u_{s}$ applied to the connected network at the port while the response at the port is the algebraic sum $y_{r}+y_{cr}$. When all the ports of $N$ and $N_c$ under a fixed indexing have same source type and are connected this way at all the ports of respective index we denote the interconnected network as $p(N,N_{c})$. 
    
    \item The networks are said to be in \emph{source series} connection at the ports if the source applied to the connected network at the port is the algebraic sum $u_{s}+u_{cs}$ while the response at the port is the common response $y_{r}=y_{cr}$. When all the ports of $N$ and $N_c$ under a fixed indexing have same source type and are connected this way at all the ports of respective index we denote the interconnected network as $s(N,N_{c})$.  
\end{enumerate}

An important point is that the responses of the individual networks remain unaltered when all the port connections are source parallel. Hence if the compensating network $N_{c}$ is required to affect response of the given network $N$ then the interconnection has to be source series type at least for one port. Hence our formulation of the stabilization problem is defined only for source series connection $s(N,N_{c})$. We shall make following assumption in all our connections at ports.

\begin{assumption}\label{assumonintcon}\emph{
A multiport network $N$ which is under consideration for stabilization is connected by a compensating network $N_{c}$ which has the same number of ports with same type of sources at each of the indices and is connected in source series form at all the ports at corresponding indices. Thus our assumption excludes problems in which number of source series connections at ports are less than the total number of ports.
}
\end{assumption}

\begin{problem}Let $N$ be a given multiport network with a fixed indexing of its ports and let the type of sources at these ports be fixed. Consider that a compensating network $N_{c}$ with same number of ports and same type of sources at the same indices, is connected to $N$ in source series form at all the ports to form the interconnection $s(N,N_{c})$. Then what is the characterization of the set of all multiport networks $N_c$ such that 1) The interconnected network $s(N,N_{c})$ is BSBR stable with respect to independent sources at the interconnections and responses of the sources at each port and 2) the interconnection $s(\tilde{N},N_{c})$ is also BSBR stable for all $\tilde{N}$ in a sufficiently small neighborhood of $N$. When networks $N_{c}$ exist they are called \emph{stabilizing compensators} of $N$. 
\end{problem}

\begin{remark}\emph{
This question is analogous to that of the question in feedback system theory, "what are all possible stabilizing feedback controllers of a LTI plant?" However unlike the system theory formulation at a fixed parameter the stabilization problem above also requires stability of interconnection to be satisfied even for sufficiently small perturbations of the original network. An important difference of the stabilization problem from the feedback stabilization problem is as follows. In the feedback theory the notion of \emph{internal stability} of a feedback system in terms of the signal flow graph originally defined in \cite{dlms} is BIBO stability with respect to two external inputs and two external outputs in the feedback loop.  Such an injection of external inputs is possible in feedback systems because the output and input are at two different places in the signal flow graph. On the other hand in the case of network interconnection, external source at a port and the response due to it are at the same port. Hence the two input two output notion of stability is not even meaningful for port interconnection. Hence we made an alternative formulation of the stabilization problem as above.
}
\end{remark}

In Systems Theory the question of characterizing all stabilizing controllers led to landmark new developments in recent decades \cite{vids, doft, zhdg} broadly known as factorization approach based $H_{\infty}$ control. However for network interconnections such a problem appears never to have been posed formally as far as known to the authors although stability of a single port connection of active networks has been well known \cite[chapter 11]{chdk}. 

In \cite{msm1} the above problem is solved for multiport networks by first establishing conditions for BSBR stability of multiport networks in terms of their hybrid functions and prove existence of $N_{c}$ given $N$ when there are no unstable hidden modes in $N$. The coprime fractional framework turns out to be highly appropriate for formulating and solving this problem. The results of the paper are presented in subsequent section to build background for describing the robust network stabilization problem.



\section{Stabilization and Robust stabilization by network interconnection}
\label{robuststproforsec}
As stated in the introduction, the stabilization problem for interconnection of networks $N$ and $N_{c}$ is defined only for source series connection of ports. We thus consider the two networks $N$, $N_{c}$ with following assumptions.
\begin{enumerate}
\item Assumption 1. The two networks $N$, $N_{c}$ have same number of ports and same type of sources at the ports of same indices hence the source series interconnection $s(N,N_{c})$ is defined.
\item Assumption 2. The assumption stated in introduction section that all ports of $N$ and $N_c$ of same indices are connected in source series form. 
\item Assumption 3. Both $N$, $N_{c}$ have no unstable hidden modes and have driving point functions given respectively by (\ref{networkfunction}) for $N$ and
\begin{equation}
\label{eq4}
Y_{cr}= H_c U_{cs}
\end{equation}
for $N_{c}$.
\item Assumption 4. Both $H$ and $H_{c}$ are nonsingular, proper and have nonsingular and proper inverses $H^{-1}$, $H_{c}^{-1}$.
\end{enumerate} 
In the interconnection $s(N,N_{c})$ the sources satisfy the equation
\begin{equation}
\hat{U}_{s}=U_{s}+U_{cs}
\end{equation}
while the common response vector $\hat{Y}_{r}$ of the two networks at the ports is given by the following equation.
\begin{equation}
\hat{Y}_{r}=HU_{s}=H_{c}U_{cs}
\end{equation}
Hence, the source vectors reflected on ports of each network are given by the following equations.
\begin{equation}
U_{s}=H^{-1}\hat{Y}_{r}, \ \ \ \ \ \mbox{   }U_{cs}=H_{c}^{-1}\hat{Y}_{r}
\end{equation}
Therefore, for the interconnected network, the source response relationship is given by the following equation.
\begin{equation}\label{TotalResponserelationmult}
\hat{Y}_{r}=(H^{-1}+H_{c}^{-1})^{-1}\hat{U}_{s}
\end{equation}
which is the hybrid representation of the interconnected network. It can be observed that the interconnected network is BSBR stable iff the hybrid matrix of interconnection $(H^{-1}+H_{c}^{-1})^{-1}$ is in $M(\stabS)$. The stabilization problem is defined under the restriction that the hybrid matrices of the interconnection arising from all $\tilde{H}$, due to uncertainties and perturbations, in sufficiently small neighborhood of $H$ are also stable. The stabilization problem now translates to the following.

\begin{problem}[Multi-port Hybrid Stabilization]
Given a multi-port hybrid matrix function $H$ of an LTI network, find all hybrid network function matrices $H_{c}$ of the compensating network such that the source series interconnection $s(N,N_{c})$ satisfies
\begin{enumerate}
\item $\hat{H}=(H^{-1}+H_{c}^{-1})^{-1}$ is in $M(\stabS)$.
\item $\tilde{\hat{H}}=(\tilde{H}^{-1}+H_{c}^{-1})^{-1}$ is in $M(\stabS)$ for all $\tilde{H}$ in a neighbourhood of $H$.
\end{enumerate}
The matrix functions $H_{c}$ shall be called \emph{stabilizing hybrid compensators} of $H$.
\end{problem}

This problem can be solved using doubly coprime fractional (DCF) representation 
and the notion of neighborhood of a network function. For a comprehensive formulation of the multi-port stabilization we resort to the matrix case of coprime factorization theory over the $\stabS$ developed in \cite{vids}. The doubly coprime representation of $H$ is given as
\begin{enumerate}
\item $H$ is expressed by right and left fractions $H=N_{r}D_{r}^{-1}=D_{l}^{-1}N_{l}$ where $N_{r},D_{r},N_{l},D_{l}$ are matrices over $M(\stabS)$, $D_{r},D_{l}$ are square and have no zeros at infinity,
\item There exist matrices $X_{l},Y_{l}$ and $X_{r},Y_{r}$ in $M(\stabS)$ which satisfy the following equation.
\beq\label{dcf}
\left[\begin{array}{rr}
 X_{l} & Y_{l}\\
 D_{l} & -N_{l} 
\end{array}\right]
\left[\begin{array}{rr}
 N_{r} & Y_{r}\\
 D_{r} & -X_{r}
\end{array}\right]=
\left[\begin{array}{rr}
 I & 0\\
 0 & I
\end{array}\right]
\eeq
\end{enumerate} 

We describe the doubly coprime fractional representation of a compensating network with hybrid network function $H_c$ by the respective matrices of fractions and identities by $N_{cr},D_{cr},N_{cl},D_{cl}$ and $X_{cr},Y_{cr},X_{cl},Y_{cl}$. It is also useful to recall that a square matrix $U$ in $M(\stabS)$ is called \emph{unimodular} if $U^{-1}$ also belongs to $M(\stabS)$. This is true iff $\det U$ is a unit or an invertible element of $\stabS$. 

Next, an open neighbourhood of $H$ is specified in graph topology in terms of neighbourhoods of the doubly coprime fractional representation of $H$. Any $\tilde{H}$ in a neighbourhood of $H$ is specified by a doubly coprime fractional representation 
with fractions $\tilde{H}=\tilde{N}_{r}\tilde{D}_{r}^{-1}=\tilde{D}_{l}^{-1}\tilde{N}_{l}$ and matrices $\tilde{X}_{l},\tilde{Y}_{l}$ and $\tilde{X}_{r},\tilde{Y}_{r}$ in $M(\stabS)$ satisfying the identities as given in equation (\ref{dcf}) in which the fractions $\tilde{N}_{r},\tilde{D}_{r}$,
$\tilde{D}_{l},\tilde{N}_{l}$ are in respective neighbourhoods of the fractions of $H$.

In terms of the doubly coprime fractional (DCF) representation and the notion of neighbourhoods we have the preliminary.

\begin{theorem}Consider the port interconnection $s(N,N_{c})$. Then $H_c$ is a stabilizing compensator of $H$ iff for a given doubly coprime fractions as above of $H$ there exist a doubly coprime fractions of $H_{c}$ that satisfy the following equation.
\begin{equation}\label{dcfrelationwithTc}
\left[\begin{array}{rr}
 D_{cl} & N_{cl}\\
 D_{l} & -N_{l} 
\end{array}\right]
\left[\begin{array}{rr}
 N_{r} & N_{cr}\\
 D_{r} & -D_{cr}
\end{array}\right]=
\left[\begin{array}{rr}
 I & 0\\
 0 & I
\end{array}\right]
\end{equation}
\end{theorem}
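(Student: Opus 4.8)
The plan is first to expand the block identity \eqref{dcfrelationwithTc} into its four block equations and to tie them to the hybrid matrix $\hat H=(H^{-1}+H_c^{-1})^{-1}$ of the interconnection. Multiplying out, \eqref{dcfrelationwithTc} is equivalent to $D_{cl}N_r+N_{cl}D_r=I$, $D_{cl}N_{cr}-N_{cl}D_{cr}=0$, $D_lN_r-N_lD_r=0$ and $D_lN_{cr}+N_lD_{cr}=I$. The relation $D_lN_r-N_lD_r=0$ is automatic from $H=N_rD_r^{-1}=D_l^{-1}N_l$, and $D_{cl}N_{cr}-N_{cl}D_{cr}=0$ merely expresses that $(N_{cr},D_{cr})$ and $(D_{cl},N_{cl})$ are right and left fractions of the same $H_c$. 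Hence the real content of \eqref{dcfrelationwithTc} is carried by the two B\'ezout relations $D_{cl}N_r+N_{cl}D_r=I$ and $D_lN_{cr}+N_lD_{cr}=I$. The pivotal computation I would then record uses $H^{-1}=D_rN_r^{-1}$ and $H_c^{-1}=N_{cl}^{-1}D_{cl}$ to factor $H^{-1}+H_c^{-1}=N_{cl}^{-1}\Delta N_r^{-1}$, where $\Delta:=N_{cl}D_r+D_{cl}N_r$, giving the key identity $\hat H=N_r\,\Delta^{-1}\,N_{cl}$ valid for any doubly coprime fractions.

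For the ``if'' direction I would assume \eqref{dcfrelationwithTc} holds for some doubly coprime fraction of $H_c$. The first B\'ezout relation then says exactly $\Delta=I$, so $\hat H=N_rN_{cl}\in M(\stabS)$ and condition~1 holds. For condition~2, keep the compensator fractions fixed and let $\tilde H$ range over a graph-topology neighbourhood of $H$, so that $\tilde N_r,\tilde D_r$ lie near $N_r,D_r$ and $\tilde\Delta:=N_{cl}\tilde D_r+D_{cl}\tilde N_r$ lies in a small $M(\stabS)$-ball around $\Delta=I$. Because the units of the Banach algebra $M(\stabS)$ form an open set --- once $\|I-\tilde\Delta\|<1$ the Neumann series $\sum_k(I-\tilde\Delta)^k$ converges in $M(\stabS)$ to $\tilde\Delta^{-1}$ --- the matrix $\tilde\Delta$ is unimodular and $\tilde{\hat H}=\tilde N_r\tilde\Delta^{-1}N_{cl}\in M(\stabS)$ for all sufficiently small perturbations. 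Thus $H_c$ is a stabilizing compensator.

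For the ``only if'' direction I would start from an arbitrary doubly coprime fraction $(\bar N_{cr},\bar D_{cr},\bar N_{cl},\bar D_{cl})$ of $H_c$ and set $\Delta=\bar N_{cl}D_r+\bar D_{cl}N_r$, so that $\hat H=N_r\Delta^{-1}\bar N_{cl}$. The crucial step, which I expect to be the main obstacle, is to deduce from robust stability that $\Delta$ is \emph{unimodular}. Nominal stability by itself only yields $N_r\Delta^{-1}\bar N_{cl}\in M(\stabS)$, a condition an unstable pole of $\Delta^{-1}$ could still survive through cancellation against the factors $N_r$ and $\bar N_{cl}$, which are square but not units of $M(\stabS)$, carrying the closed-RHP zeros of $H$ and $H_c$. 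It is precisely condition~2 that forbids this: perturbing $(N_r,D_r)$ moves the closed-RHP zeros of $\tilde N_r$ and the poles of $\tilde\Delta^{-1}$ independently, so any such cancellation is destroyed by a small perturbation and leaves a closed-RHP pole in $\tilde{\hat H}$, contradicting robustness. Turning this picture into a proof --- using the right coprimeness of $(N_r,D_r)$ and the left coprimeness of $(\bar D_{cl},\bar N_{cl})$ together with openness of the graph-topology neighbourhood as developed in \cite{vids} --- is the technical heart of the theorem.

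Once $\Delta$ is known to be unimodular the construction is routine. Rescaling the left fraction by $N_{cl}:=\Delta^{-1}\bar N_{cl}$, $D_{cl}:=\Delta^{-1}\bar D_{cl}$ preserves $D_{cl}^{-1}N_{cl}=H_c$ and achieves $D_{cl}N_r+N_{cl}D_r=I$. Writing $L:=\left[\begin{smallmatrix}D_{cl}&N_{cl}\\ D_l&-N_l\end{smallmatrix}\right]$ and multiplying on the right by the unimodular matrix $\left[\begin{smallmatrix}N_r&Y_r\\ D_r&-X_r\end{smallmatrix}\right]$ taken from \eqref{dcf} yields an upper-triangular unipotent block matrix, so $L$ is unimodular; reading off the second block column of $L^{-1}$ then supplies $N_{cr},D_{cr}\in M(\stabS)$ that automatically satisfy the two remaining B\'ezout relations and form a right fraction of $H_c$. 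This produces a doubly coprime fraction of $H_c$ fulfilling \eqref{dcfrelationwithTc}, completing the equivalence.
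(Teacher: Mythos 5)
First, a point of reference: the paper never proves this theorem itself --- it is stated as a ``preliminary'' imported from \cite{msm1} --- so there is no in-paper proof to compare your attempt against, and it must stand on its own. Much of it does. The factorization $H^{-1}+H_{c}^{-1}=N_{cl}^{-1}\left(N_{cl}D_{r}+D_{cl}N_{r}\right)N_{r}^{-1}$, hence $\hat{H}=N_{r}\Delta^{-1}N_{cl}$ with $\Delta:=N_{cl}D_{r}+D_{cl}N_{r}$, is correct and rightly identifies unimodularity of $\Delta$ as the pivot of the whole statement. Your ``if'' direction is complete and sound: the identity (\ref{dcfrelationwithTc}) forces $\Delta=I$, so $\hat{H}=N_{r}N_{cl}\in M(\stabS)$, and the Neumann-series (small-gain) argument, with the compensator fractions held fixed and the fractions of $\tilde{H}$ perturbed, keeps $\tilde{\Delta}$ unimodular for all sufficiently small perturbations, which is exactly condition 2 of the paper's stabilization problem. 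The closing construction is also correct: after rescaling the left fractions by $\Delta^{-1}$, right-multiplying $L$ by the unimodular second factor of (\ref{dcf}) gives a block-unipotent matrix, so $L$ is unimodular, and the second block column of $L^{-1}$ supplies $N_{cr},D_{cr}\in M(\stabS)$ satisfying the remaining B\'ezout relations and forming a right fraction of $H_{c}$.

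The genuine gap is in the ``only if'' direction, exactly where you flag it yourself: you never prove that robust stability forces $\Delta$ to be unimodular, offering only the heuristic that an unstable cancellation between $\Delta^{-1}$ and the outer factors $N_{r}$, $\bar N_{cl}$ ``would be destroyed by a small perturbation.'' That implication is the entire content of the hard direction. Nominal stability alone genuinely does not give unimodularity in this port-interconnection setting (a closed-RHP zero of $\det\Delta$ can be absorbed by zeros of $N_{r}$ and $\bar N_{cl}$, which carry the RHP zeros of $H$ and $H_{c}$), and this is precisely why the paper's definition of stabilizing compensator includes condition 2; so the crux of the theorem is the statement you leave as ``the technical heart.'' To close it you would need a concrete argument: given that $\det\Delta$ has a closed-RHP zero (or a zero at infinity) yet $N_{r}\Delta^{-1}\bar N_{cl}\in M(\stabS)$, construct an arbitrarily small admissible perturbation $(\tilde N_{r},\tilde D_{r})$ --- still coprime, still the fractions of a network function in the graph-topology neighbourhood --- for which $\tilde N_{r}\tilde\Delta^{-1}\bar N_{cl}$ acquires a closed-RHP pole; genericity language is not a substitute for exhibiting such a perturbation and verifying its admissibility. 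As written, your proposal proves the ``if'' half and the post-unimodularity bookkeeping, but assumes the theorem's crux rather than proving it.
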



The structure of stabilizing compensators $H_c$ now follows from the equation (\ref{dcfrelationwithTc}) in terms of the DCF of $H$ as follows.

\begin{cor}
Given a DCF (\ref{dcf}) of $H$ the set of all stabilizing compensators $H_c$ are given by any of the following alternative formulae.
\beq\label{Tcparametrization}
\begin{array}{rcl}
H_{c} & = & (X_{l}-QD_{l})^{-1}(Y_{l}+QN_{l})\\
H_{c} & = & (Y_{r}+N_{r}Q)(X_{r}-D_{r}Q)^{-1}
\end{array}
\eeq
for all $Q$ in $M(\stabS)$ such that functions $\det (X_{l}-QD_{l})$ and $\det (X_{r}-D_{r}Q)$ have no zero at infinity.
\end{cor}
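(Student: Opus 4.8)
The plan is to reduce the statement, via the preliminary Theorem, to a purely algebraic parametrization over $M(\stabS)$ and then settle it by the standard two-step pattern: verify sufficiency by substitution, and necessity by extracting the free parameter $Q$ from the two-sided inverse relation (\ref{dcf}). Throughout I would write $L_0=\begin{bmatrix} X_l & Y_l\\ D_l & -N_l\end{bmatrix}$ and $R_0=\begin{bmatrix} N_r & Y_r\\ D_r & -X_r\end{bmatrix}$ for the two factors of (\ref{dcf}), so $L_0R_0=R_0L_0=I$, which unpacks into the block identities $X_lN_r+Y_lD_r=I$, $D_lN_r-N_lD_r=0$, $D_lY_r+N_lX_r=I$ and $X_lY_r-Y_lX_r=0$. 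By the Theorem, $H_c$ is a stabilizing compensator precisely when it admits a doubly coprime fraction $(N_{cl},D_{cl},N_{cr},D_{cr})$ satisfying (\ref{dcfrelationwithTc}); since the second block row of the left factor there is the fixed pair $\begin{bmatrix} D_l & -N_l\end{bmatrix}$, the entire task is to describe the admissible top rows $\begin{bmatrix} D_{cl} & N_{cl}\end{bmatrix}$ and right columns $\begin{bmatrix} N_{cr}\\ -D_{cr}\end{bmatrix}$.

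For sufficiency I would set $D_{cl}=X_l-QD_l$, $N_{cl}=Y_l+QN_l$, $N_{cr}=Y_r+N_rQ$, $D_{cr}=X_r-D_rQ$ for arbitrary $Q\in M(\stabS)$ and verify directly that the product in (\ref{dcfrelationwithTc}) is the identity. Each block collapses using only the identities above: the constant parts reduce to the Bezout relations, while every cross term carries the factor $D_lN_r-N_lD_r=0$ and vanishes; in particular the off-diagonal block reduces to $Q-Q=0$. This exhibits a genuine DCF of a compensator, so the Theorem certifies it is stabilizing, and the two formulae in (\ref{Tcparametrization}) are then simply $H_c=D_{cl}^{-1}N_{cl}$ and $H_c=N_{cr}D_{cr}^{-1}$.

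For necessity I would start from an arbitrary DCF of a stabilizing compensator satisfying (\ref{dcfrelationwithTc}) and subtract the particular solution from (\ref{dcf}). With $L=\begin{bmatrix} D_{cl} & N_{cl}\\ D_l & -N_l\end{bmatrix}$, the first block columns of (\ref{dcf}) and (\ref{dcfrelationwithTc}) give $L_0\begin{bmatrix}N_r\\D_r\end{bmatrix}=L\begin{bmatrix}N_r\\D_r\end{bmatrix}=\begin{bmatrix}I\\0\end{bmatrix}$, and since $L,L_0$ share their bottom row the difference reduces to $(D_{cl}-X_l)N_r+(N_{cl}-Y_l)D_r=0$. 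Right-multiplying the row $\begin{bmatrix} D_{cl}-X_l & N_{cl}-Y_l\end{bmatrix}$ by $R_0$ annihilates the first coordinate and exposes the second as $-Q$ with $Q=(N_{cl}-Y_l)X_r-(D_{cl}-X_l)Y_r\in M(\stabS)$; multiplying back by $L_0$ forces $D_{cl}=X_l-QD_l$ and $N_{cl}=Y_l+QN_l$. The identical computation on the column $\begin{bmatrix} N_{cr}-Y_r\\ X_r-D_{cr}\end{bmatrix}$, using $D_l(N_{cr}-Y_r)+N_l(D_{cr}-X_r)=0$, yields $N_{cr}=Y_r+N_rQ'$ and $D_{cr}=X_r-D_rQ'$, and substituting both into the off-diagonal block $D_{cl}N_{cr}-N_{cl}D_{cr}=0$ of (\ref{dcfrelationwithTc}) gives $Q'-Q=0$, so the two extracted parameters coincide. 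Finally, for $H_c=D_{cl}^{-1}N_{cl}=N_{cr}D_{cr}^{-1}$ to be a valid proper network function the denominators must be nondegenerate with no zero at infinity, which is exactly the stated condition on $\det(X_l-QD_l)$ and $\det(X_r-D_rQ)$.

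The step I expect to be the main obstacle is the necessity direction, specifically guaranteeing that the extracted $Q$ genuinely lies in $M(\stabS)$ and that the left- and right-fraction parameters are one and the same. Both points rely essentially on having the full two-sided inverse $L_0R_0=R_0L_0=I$ from (\ref{dcf}) rather than mere one-sided coprimeness, so the argument must deploy \emph{both} products; keeping the block bookkeeping and the signs consistent across the left and right extractions, and confirming that the no-zero-at-infinity condition transfers correctly to $D_{cl}$ and $D_{cr}$, is where the care lies rather than in any deep idea.
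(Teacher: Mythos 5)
Your proof is correct and is the argument the paper intends: the paper states this corollary without proof (asserting only that the structure of stabilizing compensators ``follows from'' equation (\ref{dcfrelationwithTc}), with details deferred to \cite{msm1}), and your two-step derivation --- sufficiency by block substitution using the four identities packed into (\ref{dcf}), necessity by extracting $Q$ with the inverse factor and then identifying the left and right parameters through the vanishing off-diagonal block $D_{cl}N_{cr}-N_{cl}D_{cr}=0$ --- is precisely the standard Youla-type parametrization adapted to this network setting. The one step you should make explicit is the passage from the one-sided identity $L_{0}R_{0}=I$ actually written in (\ref{dcf}) to the two-sided identity $R_{0}L_{0}=I$ that your necessity argument relies on: since $\det L_{0}\cdot\det R_{0}=1$, $\det L_{0}$ is a unit of $\stabS$, so $L_{0}$ is unimodular over $M(\stabS)$ and its right inverse $R_{0}$ is its two-sided inverse, which licenses both extraction steps.
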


\subsection{Robust Stabilization Problem}
Consider $N_{l}$ and $D_{l}$ to be matrices over $M(\stabS)$ at nominal parameter values of a given network. Due to uncertainties, perturbations and model errors, these matrices differ from their nominal values such that $\tilde N_{l}$ and $\tilde {D_{l}}$ form the neighborhood of $N_{l}$ and $D_{l}$. Solving the robust stabilization problem involves finding a compensating network, $H_{c}$ with right coprime fractions $N_{cr},D_{cr}$ such that the interconnection remains stable even in the neighborhood of $N_{l}$ and $D_{l}$.
Let $\Delta$ be the matrix representing the neighborhood of $N_{l}$ and $D_{l}$ and is as given below.
\begin{equation}\Delta = 
\left[\begin{array}{rr}
\tilde N_{l}-N_{l} & \ \ \tilde D_{l}-D_{l}
 \end{array}\right] =
 \left[\begin{array}{rr}
\Delta_{N} & \ \ \Delta_{D}
 \end{array}\right]
\end{equation}
Thus $\tilde N_{l}$ and $\tilde D_{l}$ can be expressed as,
\begin{align}
\tilde N_{l}&=N_{l}+\Delta_{N} \notag \\
\tilde D_{l}&=D_{l}+\Delta_{D}
\end{align} 
Now, consider
\begin{align*}
 \left[\begin{array}{rr}
\tilde N_{l} & \ \ \tilde D_{l}
 \end{array}\right] 
 \left[\begin{array}{rr}
 D_{cr} \\
 N_{cr}
 \end{array}\right] 
 &= 
 \left[\begin{array}{rr}
N_{l}+\Delta_{N} & \ \ D_{l}+\Delta_{D}
 \end{array}\right] 
 \left[\begin{array}{rr}
 D_{cr} \\
 N_{cr}
 \end{array}\right] \\
 &= \Big(
 \left[\begin{array}{rr}
N_{l} & \ \ D_{l}
 \end{array}\right]  
 +
 \left[\begin{array}{rr}
\Delta_{N} & \ \ \Delta_{D}
 \end{array}\right] \Big)
  \left[\begin{array}{rr}
 D_{cr} \\
 N_{cr}
 \end{array}\right]
 \end{align*}
But $N_{l}D_{cr}+D_{l}N_{cr}=I$ and $\Delta =\left[\begin{array}{rr}
\Delta_{N} & \ \ \Delta_{D}
 \end{array}\right]$.
Therefore, the above equation can be simplified as shown below.
\begin{equation}
\left[\begin{array}{rr}
\tilde N_{l} & \ \ \tilde D_{l}
\end{array}\right] 
\left[\begin{array}{rr}
D_{cr} \\
N_{cr}
\end{array}\right] 
=
I + \Delta \left[\begin{array}{rr}
D_{cr} \\
N_{cr}
\end{array}\right] 
\end{equation} 

With nominal parameter values, the identity $N_{l}D_{cr}+D_{l}N_{cr}=I$ is satisfied but with perturbations and disturbances 
$
 I + \Delta \left[\begin{array}{rr}
 D_{cr} \\
 N_{cr}
 \end{array}\right]
 $
 must also belong to $M(\stabS)$ so as to achieve robust stabilization. For this, following equation must be satisfied.
 \begin{equation}
 \Big \Vert \Delta \left[\begin{array}{rr}
 D_{cr} \\
 N_{cr}
 \end{array}\right] \Big \Vert_{\infty}<1
 \end{equation}
 
$\Delta$ represents various matrices based on uncertainties which should satisfy $\Vert \Delta (j\omega)) \Vert \leq R(j\omega) \ \ \forall \ \ \omega \in [0,\infty)$ where $R(j\omega)$ represents the uppermost bound on $\Delta$. In worst case, $\Delta$ can be considered to be equal to $R$ being one of the possible functions. Thus, the robust stabilization problem now reduces to the following. 
\begin{equation}
\Big \Vert R \left[\begin{array}{rr}
D_{cr} \\
N_{cr}
\end{array}\right] \Big \Vert_{\infty}<1
\end{equation}
 
But $D_{cr}$ and $N_{cr}$, from corollary 
are respectively given as $(X_{r}-D_{r}Q)$ and $(Y_{r}+N_{r}Q)$ in terms of a free parameter matrix $Q$. Thus, the robust stabilization problem can further be simplified as below.
 \begin{equation}
 \label{robuststabi1}
 \Big \Vert R \left[\begin{array}{rr}
X_{r}-D_{r}Q \\
Y_{r}+N_{r}Q
 \end{array}\right] \Big \Vert_{\infty}<1
 \end{equation}
Simplifying equation (\ref{robuststabi1}) further gives,
\begin{equation}
\label{robuststabfor}
\Big \Vert R 
\left[\begin{array}{rr}
X_{r} \\
Y_{r}
\end{array}\right] - R
\left[\begin{array}{rr}
D_{r} \\
-N_{r}
\end{array}\right] Q \Big \Vert_{\infty}<1 
\equiv
\Vert T_{1}-T_{2}Q\Vert_{\infty}<1
\end{equation}
Hence, the robust stabilization problem can be solved iff there a solution to the following optimization problem,
\beq
\min_{Q\in M(\stabS)} \Vert T_{1}-T_{2}Q\Vert_{\infty}<1
\eeq
This formulation of the robust stabilization problem is mathematically identical to the well known formulation in feedback control theory and also has well known approaches for solution \cite{doft,fran}.
\section{Solution to robust stabilization problem}
\label{solutionrobuststab}
The solution to robust stabilization problem is explored in this section. Here, $T_{i}$' s are matrix valued functions. The problem is lot harder to solve as compared to when they are scalar valued. This problem is typically a model matching problem which computes a matrix $Q$ in $\stabS$ so as to minimize $\Vert T_{1}-T_{2}Q\Vert$ from known matrices $T_{1}$ and $T_{2}$. \\
Let $\gamma$ denote the infimal model matching error as given below.
\begin{equation}
\gamma := \textrm{inf}\{\Vert T_{1}-T_{2}Q\Vert_{\infty} : Q \in \stabS \}
\end{equation} 
It can be solved by computing the upper bound $\beta$ for $\gamma$ such that $\beta-\gamma$ is less than a pre-specified tolerance and then $Q$ in $\stabS$ can be computed which satisfies the given robust stabilization problem as below.
\beq
\min_{Q\in M(\stabS)} \Vert T_{1}-T_{2}Q\Vert_{\infty}\leq \beta
\eeq
$Q$ obtained after solving this problem may not be optimal but it will be as near optimality as desired.

For solving such a problem in scalar case, one of the approaches as described in \cite{fran}, is to reduce it to a Nehari problem which approximates $RL_{\infty}$ matrix\footnote{A $RL_{\infty}$ matrix is a real rational matrix which is proper and has no poles on imaginary axis.} by a matrix belonging to $\stabS$. The problem can be simplified by bringing in an inner-outer factorization of $T_{2}$ as $T_{2}=T_{2i}T_{2o}$. For $Q$ in $\stabS$, following simplification can be carried out.  
\begin{align*}
\Vert T_{1}-T_{2}Q\Vert &= \Vert T_{1}-T_{2i}T_{2o}Q\Vert  \\
&=\Vert T_{2i}(T_{2i}^{-1} T_{1}-T_{2o}Q\Vert  \\ 
&=\Vert T_{2i}^{-1}T_{1}-T_{2o}Q\Vert  \\ 
&=\Vert R-X \Vert
\end{align*}

This simplification is possible due to the property of the inner factor $\vert T_{2i}(j\omega)\vert=1$. The inner factor does not affect the infinity norm and can be taken out from the norm.    

The Nehari problem can be posed as:- Given $R$ in $RL_{\infty}$ with dist $(R,\stabS)<1$, find all $X's$ in $\stabS$ such that   $\Vert R-X \Vert_{\infty}\leq 1$. Only some of these are closest to $R$ that satisfy  
$\Vert R-X \Vert_{\infty}=$ dist$(R,\stabS)$.

It is required to find the distance from an $L_{\infty}$ matrix $R$ to $\stabS$. In systemic terms, a given unstable transfer function in $L_{\infty}$ norm is to be approximated by a stable one. Nehari's theorem is an elegant solution to this problem. A lower bound for the distance can be easily obtained. Fix $X$ in $\stabS$ then $\Vert R-X \Vert_{\infty}= \Vert \Gamma_{R}\Vert $ where $\Gamma_{R}$ is the Hankel operator. $ \Vert \Gamma_{R}\Vert $ is a lower bound for the distance $R$ to $\stabS$. Nehari's theorem states that there exists a closest matrix $X$ in $\stabS$ to a given matrix $R$ in $L_{\infty}$ and $\Vert R-X \Vert =\Vert \Gamma_{R}\Vert$. 

In general, there are many $X$'s nearest $R$. The time domain interpretation of Nehari's theorem states that the distance from a given noncausal system to the nearest causal linear and time invariant one equals the norm of the Hankel operator. Alternatively, the norm of the Hankel operator is a measure of non-causality.

To see the systematic development for the solution to robust stabilization problem in matrix case, first an inner-outer factorization of $T_{2}$ can be carried out as $T_{2}=U_{i}U_{o}$ where $U_{i}$ is inner and $U_{o}$ is outer. Define $RL_{\infty}$ matrix $Y$ as $Y:=I-U_{i}U_{i}^{*}T_{1}$ where $U_{i}^{*}$ is complex conjugate transpose of the matrix $U_{i}$. For defining the spectral factorization of a matrix, consider a square matrix $G(s)$ which satisfies the following properties. 
\begin{align}
G,G^{-1} \ &\in RL_{\infty} \notag \\
G^{*} &= G \notag \\
G(\infty) &>0 \notag 
\end{align}
Such a matrix has pole and zero symmetry about the imaginary axis. $G$ can be factored as $G=G_{-}^{*} G$ where $G_{-},G_{-}^{-1} \in \stabS$. This is called a spectral factorization of $G$ and $G_{-}$ is a spectral factor.  

If $\beta$ is a real number greater than $\Vert Y \Vert_{\infty}$, then the matrix $\beta^{2}-Y^{*}Y$ has a spectral factor $Y_{o}$ where $Y^{*}$ is complex conjugate transpose of the matrix $Y$. Let $R$ be defined as a 
$RL_{\infty}$ matrix as $R:=U_{i}^{*}T_{1}Y_{o}^{-1}$. Thus, $R$ depends on $\beta$. It is now required to find a closest matrix $X$ in $\stabS$ to a given matrix $R$ in $L_{\infty}$.

Following are the two preliminary technical facts that need to be considered before stating the main theorem for finding $X$ in $\stabS$ in the form of lemmas as found in \cite {fran} without proofs and used in the derivation of the main theorem.
\begin{lem}\label{lema1}\emph{
If $U$ is an inner matrix and $E$ is an $RL_{\infty}$ matrix 
\begin{equation*}
 E:= \Big \Vert  \left[\begin{array}{rr}
 U^{*} \\
 I-UU^{*}
 \end{array}\right] \Big \Vert_{\infty}
 \end{equation*}
then $\Vert EG \Vert_{\infty}=\Vert G \Vert_{\infty} $ for all matrices $G$ in $RL_{\infty}$ matrix.}
\end{lem}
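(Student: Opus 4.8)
The plan is to recognise that, reading $E$ as the stacked matrix $\left[\begin{array}{c} U^* \\ I-UU^* \end{array}\right]$ itself (the product $EG$ in the conclusion being an ordinary matrix product), the lemma reduces to the single structural fact that $E$ is \emph{inner}, i.e.\ $E^*E=I$ pointwise on the imaginary axis. Once this is in hand, left multiplication by $E$ preserves the singular values of $G(j\omega)$ at every frequency, and hence preserves the $L_{\infty}$ norm.

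First I would compute $E^*E$ directly. Since $UU^*$, and therefore $I-UU^*$, is Hermitian, we have $E^*=\left[\begin{array}{cc} U & I-UU^* \end{array}\right]$, so that
\[
E^*E = UU^* + (I-UU^*)^2 .
\]
The decisive step uses the defining property of an inner matrix, $U^*U=I$ on the imaginary axis. This yields $UU^*UU^*=U(U^*U)U^*=UU^*$, so $I-UU^*$ is idempotent, $(I-UU^*)^2=I-UU^*$, and substituting gives
\[
E^*E = UU^* + (I-UU^*) = I .
\]

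With $E^*E=I$ established, the conclusion is immediate: for any $G$ in $RL_{\infty}$ of compatible dimension, $(EG)^*(EG)=G^*(E^*E)G=G^*G$ at each $j\omega$, so $EG(j\omega)$ and $G(j\omega)$ share the same singular values. Taking the largest singular value at each frequency and then the supremum over $\omega$ gives $\Vert EG\Vert_{\infty}=\Vert G\Vert_{\infty}$.

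The calculation is short; the only point needing care is that the inner property supplies $U^*U=I$, whereas $UU^*$ is merely the orthogonal projection onto the range of $U$ and is \emph{not} the identity unless $U$ is square. The block $I-UU^*$ is exactly the complementary projection that restores the identity in $E^*E$, which is why it appears in the definition of $E$. As a byproduct, the same identity $E^*E=I$ shows that the stacked matrix has largest singular value $1$ at every frequency, i.e.\ $\left\Vert \left[\begin{array}{c} U^* \\ I-UU^* \end{array}\right]\right\Vert_{\infty}=1$, so even the literal scalar reading of the displayed definition of $E$ is consistent; the main difficulty is thus one of bookkeeping the inner property rather than any analytic obstacle.
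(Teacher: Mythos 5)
Your proof is correct, but there is no in-paper argument to compare it against: the paper introduces this lemma (together with Lemma \ref{lema2}) explicitly as one of ``two preliminary technical facts \ldots{} as found in \cite{fran} without proofs,'' so the paper itself supplies no proof. Judged on its own, your argument is exactly the standard one from Francis and it is complete: reading $E$ as the stacked matrix, Hermitian symmetry of $I-UU^{*}$ gives $E^{*}E=UU^{*}+(I-UU^{*})^{2}$ pointwise on the imaginary axis, the inner property $U^{*}U=I$ makes $I-UU^{*}$ idempotent, hence $E^{*}E=I$, and therefore $(EG)^{*}(EG)=G^{*}G$ at every frequency, so the largest singular values and their suprema over $\omega$ agree. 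You were also right to repair the typo in the statement: as displayed, $E$ is defined as a norm (a scalar), which is incompatible with the matrix product $EG$; that the intended object is the stacked matrix is confirmed by how the lemma is invoked in the proof of the main theorem of Section~\ref{solutionrobuststab}, where the matrix built by stacking $U_i^{*}$ on top of $I-U_iU_i^{*}$ multiplies $T_1-T_2Q$. Your closing observation that $E^{*}E=I$ forces the stacked matrix to have $H_{\infty}$ norm equal to $1$, so that even the literal scalar reading of the displayed definition would yield the conclusion, is a sensible reconciliation of the two readings, though the projection interpretation (with $I-UU^{*}$ complementing the non-square $U$) is the substantive content.
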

\begin{lem}\label{lema2}\emph{
If $F$ and $G$ are $RL_{\infty}$ matrices with equal number of columns and if
\begin{equation*}
 \Big \Vert  \left[\begin{array}{rr}
 F \\
 G
 \end{array}\right] \Big \Vert_{\infty} < \beta
 \end{equation*}
then $\Vert G \Vert_{\infty} < \beta$ and $\Vert FG_{o}^{-1} \Vert_{\infty} < 1$ where $G_{o}$ is a spectral factor of $\beta^{2}-G^{*}G$.  Conversely, if $\Vert G \Vert_{\infty} < \beta$ holds and $\Vert FG_{o}^{-1} \Vert_{\infty} < 1$, then 
\begin{equation*}
 \Big \Vert  \left[\begin{array}{rr}
 F \\
 G
 \end{array}\right] \Big \Vert_{\infty} \leq \beta
 \end{equation*}
}
\end{lem}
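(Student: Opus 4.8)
The plan is to translate each $L_{\infty}$-norm condition in the statement into a pointwise (in $\omega$) Hermitian matrix inequality, and then to observe that all three norm conditions are merely different encodings of the single inequality $F^{*}F+G^{*}G<\beta^{2}I$ on the imaginary axis (so that Lemma~\ref{lema1} is not needed here). The one computation that drives everything is the block-column Gram identity
\[
\begin{bmatrix} F \\ G \end{bmatrix}^{*}\begin{bmatrix} F \\ G \end{bmatrix}=F^{*}F+G^{*}G .
\]
Using $\Vert M\Vert_{\infty}^{2}=\sup_{\omega}\lambda_{\max}\bigl((M^{*}M)(j\omega)\bigr)$ together with the fact that an $RL_{\infty}$ matrix is continuous on the compact extended imaginary axis, so that the supremum is attained, the hypothesis $\bigl\Vert[F;G]\bigr\Vert_{\infty}<\beta$ is equivalent to $(F^{*}F+G^{*}G)(j\omega)<\beta^{2}I$ for every $\omega$.

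For the forward direction I would first note that $F^{*}F\geq 0$ pointwise, so $F^{*}F+G^{*}G<\beta^{2}I$ forces $G^{*}G<\beta^{2}I$, i.e.\ $\Vert G\Vert_{\infty}<\beta$; this is the first conclusion. The strict positivity of $\beta^{2}I-G^{*}G$ on the whole extended imaginary axis is exactly the hypothesis under which the spectral-factorization construction recalled just before the lemma applies: $\beta^{2}I-G^{*}G$ is Hermitian, lies in $RL_{\infty}$ together with its inverse, and is positive at infinity, hence admits a spectral factor $G_{o}$ with $G_{o},G_{o}^{-1}\in\stabS$ and $G_{o}^{*}G_{o}=\beta^{2}I-G^{*}G$. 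The second conclusion then follows from the chain of equivalences
\[
\Vert FG_{o}^{-1}\Vert_{\infty}<1
\iff G_{o}^{-*}F^{*}FG_{o}^{-1}<I
\iff F^{*}F<G_{o}^{*}G_{o}=\beta^{2}I-G^{*}G
\iff F^{*}F+G^{*}G<\beta^{2}I,
\]
where the second step is the congruence transformation by the pointwise-invertible factor $G_{o}^{-1}$. The rightmost inequality is precisely the hypothesis, so $\Vert FG_{o}^{-1}\Vert_{\infty}<1$.

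The converse simply runs the same equivalences in reverse: the hypothesis $\Vert G\Vert_{\infty}<\beta$ again produces the spectral factor $G_{o}$, and $\Vert FG_{o}^{-1}\Vert_{\infty}<1$ yields, through the displayed chain, $F^{*}F+G^{*}G<\beta^{2}I$ pointwise, whence $\bigl\Vert[F;G]\bigr\Vert_{\infty}\leq\beta$.

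I expect the only delicate points to be the bookkeeping of strict versus non-strict inequalities and the behaviour at $\omega=\infty$. The equivalence between an $L_{\infty}$ bound and a pointwise definiteness condition relies on the supremum over the extended imaginary axis being attained, which is where properness and continuity of $RL_{\infty}$ matrices enter; the congruence by $G_{o}^{-1}$ preserves strictness at each fixed $\omega$, but because the norm is a supremum the converse is recorded with the safe bound $\leq\beta$ rather than a strict one.
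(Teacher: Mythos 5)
The paper itself offers no proof of this lemma: it is stated verbatim as one of two "preliminary technical facts \ldots as found in \cite{fran} without proofs," so there is no in-paper argument to compare yours against. Judged on its own terms, your proof is correct, and it is in substance the standard argument behind the result in Francis: reduce all three norm conditions to the single pointwise inequality $F^{*}F+G^{*}G<\beta^{2}I$ on the extended imaginary axis via the Gram identity for the stacked matrix, use $F^{*}F\geq 0$ to extract $\Vert G\Vert_{\infty}<\beta$, check that $\beta^{2}I-G^{*}G$ satisfies the three hypotheses of the spectral-factorization definition recalled just before the lemma, and pass between $\Vert FG_{o}^{-1}\Vert_{\infty}<1$ and $F^{*}F<\beta^{2}I-G^{*}G$ by congruence with the invertible factor $G_{o}$.

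Two small observations. First, your handling of the delicate point is right: the equivalence between a strict $L_{\infty}$ bound and strict pointwise definiteness needs continuity of $RL_{\infty}$ matrices on the compact extended axis so the supremum is attained, and you say so explicitly. Second, your closing remark slightly undersells your own argument: since the chain of pointwise equivalences is strict at every point of the compact extended axis, the converse direction actually yields the strict bound $\Vert\left[\begin{array}{c}F\\ G\end{array}\right]\Vert_{\infty}<\beta$; the lemma's "$\leq\beta$" is simply a weaker statement of what the same argument proves, not a concession forced by the supremum. Neither point affects correctness.
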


With these preliminary results, the main theorem can now be stated which gives $X$ in $\stabS$ such that $\Vert R-X \Vert_{\infty}<1$ from which $Q$ in $\stabS$ can be obtained. The proof for the theorem as given in \cite{fran} indicates the approach for finding the solution to robust stabilization problem.

\begin{theorem}
\begin{enumerate}
    \item $\gamma := \textnormal{inf}\{\beta: \Vert Y \Vert_{\infty} < \beta, \ \textnormal{dist}(R,\stabS)<1\}$
    \item \textnormal{Suppose} $\beta > \gamma,\ Q,X \in \stabS$,\ $\Vert R-X \Vert_{\infty}\leq 1$ \textnormal{and} $X=U_{o}QY_{o}^{-1}$
\end{enumerate}
\textnormal{Then} $\Vert T_{1}-T_{2}Q\Vert_{\infty} \leq \beta $
\end{theorem}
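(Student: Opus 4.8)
The plan is to reduce the claimed inequality entirely to Lemmas~\ref{lema1} and~\ref{lema2} by exploiting the precise way in which $Y$, $Y_{o}$ and $R$ were defined. First I would substitute the hypothesis $X=U_{o}QY_{o}^{-1}$, equivalently $U_{o}Q=XY_{o}$, into the inner--outer factorization $T_{2}=U_{i}U_{o}$, obtaining
\begin{equation*}
T_{1}-T_{2}Q = T_{1}-U_{i}U_{o}Q = T_{1}-U_{i}XY_{o}.
\end{equation*}
Since $\beta>\gamma$ forces $\Vert Y\Vert_{\infty}<\beta$ (every feasible $\beta$ in the definition of $\gamma$ exceeds $\Vert Y\Vert_{\infty}$, so $\gamma\geq\Vert Y\Vert_{\infty}$), the spectral factor $Y_{o}$ of $\beta^{2}-Y^{*}Y$ exists and is invertible in $\stabS$, so each factor above is a well-defined $RL_{\infty}$ matrix.

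Next I would apply Lemma~\ref{lema1} with the inner matrix $U=U_{i}$, whose associated isometry $E$ preserves the $\infty$-norm, giving
\begin{equation*}
\Vert T_{1}-T_{2}Q\Vert_{\infty} = \left\Vert \left[\begin{array}{c} U_{i}^{*}(T_{1}-U_{i}XY_{o}) \\ (I-U_{i}U_{i}^{*})(T_{1}-U_{i}XY_{o}) \end{array}\right] \right\Vert_{\infty}.
\end{equation*}
Using the inner identities $U_{i}^{*}U_{i}=I$ and $(I-U_{i}U_{i}^{*})U_{i}=0$, the top block collapses to $U_{i}^{*}T_{1}-XY_{o}$ and the bottom block to $(I-U_{i}U_{i}^{*})T_{1}=Y$. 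Recalling $R=U_{i}^{*}T_{1}Y_{o}^{-1}$, i.e.\ $U_{i}^{*}T_{1}=RY_{o}$, the top block becomes $(R-X)Y_{o}$, so that
\begin{equation*}
\Vert T_{1}-T_{2}Q\Vert_{\infty} = \left\Vert \left[\begin{array}{c} (R-X)Y_{o} \\ Y \end{array}\right] \right\Vert_{\infty}.
\end{equation*}
I would then invoke the converse half of Lemma~\ref{lema2} with $F=(R-X)Y_{o}$ and $G=Y$. The decisive observation is that the spectral factor of $\beta^{2}-G^{*}G=\beta^{2}-Y^{*}Y$ is precisely $Y_{o}$, so $FG_{o}^{-1}=(R-X)Y_{o}Y_{o}^{-1}=R-X$; the two hypotheses $\Vert G\Vert_{\infty}=\Vert Y\Vert_{\infty}<\beta$ and $\Vert FG_{o}^{-1}\Vert_{\infty}=\Vert R-X\Vert_{\infty}\leq 1$ then hold by assumption and deliver $\Vert T_{1}-T_{2}Q\Vert_{\infty}\leq\beta$.

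The argument is essentially bookkeeping once the definitions are unwound: there is no deep obstacle, because $Y$, $Y_{o}$ and $R$ were engineered exactly so that the inner part telescopes and the spectral factor $Y_{o}$ cancels. The one point demanding care is the boundary case $\Vert R-X\Vert_{\infty}=1$, where the converse in Lemma~\ref{lema2} is phrased with a strict inequality. I would dispatch this directly rather than by a limiting argument: pointwise on the imaginary axis, $\Vert R-X\Vert_{\infty}\leq 1$ gives $(R-X)^{*}(R-X)\leq I$, hence $Y_{o}^{*}(R-X)^{*}(R-X)Y_{o}\leq Y_{o}^{*}Y_{o}$, and adding $Y^{*}Y$ together with the spectral identity $Y_{o}^{*}Y_{o}+Y^{*}Y=\beta^{2}I$ shows the stacked matrix is bounded by $\beta^{2}I$ even in this limiting case. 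This yields the non-strict conclusion $\leq\beta$ from the non-strict hypothesis $\leq 1$, completing the proof.
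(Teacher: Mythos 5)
Your proof of the implication in part (2) is correct, and it follows essentially the same route as the paper's own argument: apply the norm-preserving operator $E$ of Lemma~\ref{lema1} built from the inner factor $U_i$, use $U_i^{*}U_i=I$ and $(I-U_iU_i^{*})U_i=0$ to collapse $E(T_1-T_2Q)$ into a stack with top block $(R-X)Y_o$ and bottom block $Y$, and then bound the stacked norm by $\beta$. This is exactly the computation in the second half of the paper's proof (its ``reverse inequality''). In one respect you are more careful than the paper: the converse half of Lemma~\ref{lema2} requires the strict bound $\Vert FG_o^{-1}\Vert_\infty<1$, while part (2) hypothesizes only $\Vert R-X\Vert_\infty\le 1$; the paper never reconciles this (it escapes the issue only because in its proof $X$ arises from $\mathrm{dist}(R,\stabS)<1$ and therefore satisfies a strict inequality), whereas your pointwise quadratic-form argument, using $Y_o^{*}Y_o+Y^{*}Y=\beta^{2}I$, handles the boundary case cleanly.

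The genuine gap is part (1). Despite the ``$:=$'' it is not a definition: $\gamma$ was already defined before the theorem as the infimal model matching error $\inf\{\Vert T_1-T_2Q\Vert_\infty : Q\in\stabS\}$, and part (1) asserts the nontrivial equality of that number with $\beta_{inf}:=\inf\{\beta : \Vert Y\Vert_\infty<\beta,\ \mathrm{dist}(R,\stabS)<1\}$ --- the fact that legitimizes the binary search in the subsequent algorithm ($\Vert\Gamma_R\Vert<1$ iff $\gamma<\beta$). The bulk of the paper's proof is devoted to the two inequalities $\beta_{inf}\le\gamma$ and $\gamma\le\beta_{inf}$; the forward one takes any $Q$ with $\Vert T_1-T_2Q\Vert_\infty<\beta$, runs the same isometry computation, invokes the forward half of Lemma~\ref{lema2} to obtain $\Vert Y\Vert_\infty<\beta$ and $\mathrm{dist}(R,\,U_o\stabS Y_o^{-1})<1$, and then uses $U_o\stabS Y_o^{-1}=\stabS$ (since $U_o$ is right invertible and $Y_o$ invertible over $\stabS$) to conclude $\mathrm{dist}(R,\stabS)<1$. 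None of this appears in your proposal. Worse, your justification that $\beta>\gamma$ forces $\beta>\Vert Y\Vert_\infty$ quotes the part-(1) characterization of $\gamma$, so under the paper's actual definition of $\gamma$ that step is circular. It can be de-circularized without part (1) --- your own isometry identity gives $\Vert T_1-T_2Q\Vert_\infty\ge\Vert Y\Vert_\infty$ for every $Q$, hence $\gamma\ge\Vert Y\Vert_\infty$ --- but the $\mathrm{dist}(R,\stabS)<1$ half of part (1) still requires the paper's forward argument, which you would need to supply to claim the full theorem.
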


\begin{proof}
Part ($1$) of this theorem provides a method for computing an upper bound $\beta$ for $\gamma$ while part ($2$) yields a procedure for computation of a nearly optimal $Q$.

Let $\beta_{inf} := \textnormal{inf}\{\beta: \Vert Y \Vert_{\infty} < \beta, \ \textnormal{dist}(R,\stabS)<1\}$.
Choose $\epsilon>0$ and then choose $\beta$ such that $\gamma + \epsilon > \beta > \gamma$. Then, there exists $Q$ in $\stabS$ such that $\Vert T_{1}-T_{2}Q\Vert_{\infty} \leq \beta$.
From lemma \ref{lema1},
\begin{align*}
 \Big \Vert  \left[\begin{array}{rr}
 U_{i}^{*} \\
 I-U_{i}U_{i}^{*}
 \end{array}\right] (T_{1}-T_{2}Q)\Big \Vert &< \beta \\
 \Big \Vert  \left[\begin{array}{rr}
 U_{i}^{*} \\
 I-U_{i}U_{i}^{*}
 \end{array}\right] T_{1} -
 \left[\begin{array}{rr}
 U_{i}^{*} \\
 I-U_{i}U_{i}^{*}
 \end{array}\right] 
 T_{2}Q \Big \Vert &< \beta
 \end{align*}
Now, consider
\begin{equation*}
 \left[\begin{array}{rr}
 U_{i}^{*} \\
 I-U_{i}U_{i}^{*}
 \end{array}\right] 
 T_{2}=  
 \left[\begin{array}{rr}
 U_{i}^{*} \\
 I-U_{i}U_{i}^{*}
 \end{array}\right] 
 U_{i}U_{o} =
 \left[\begin{array}{rr}
 U_{o} \\
 0
 \end{array}\right] 
  \end{equation*}
 Therefore, 
 \begin{align*}
 \Big \Vert  \left[\begin{array}{rr}
 U_{i}^{*} \\
 I-U_{i}U_{i}^{*}
 \end{array}\right] T_{1} -
 \left[\begin{array}{rr}
 U_{o} \\
 0
 \end{array}\right] Q
 \Big \Vert_{\infty} &< \beta \\
 \therefore \Big \Vert  \left[\begin{array}{rr}
 U_{i}^{*}T_{1}-U_{o}Q \\
 (I-U_{i}U_{i}^{*})T_{1}
 \end{array}\right] 
 \Big \Vert_{\infty} &< \beta \\
 \therefore \Big \Vert  \left[\begin{array}{rr}
 U_{i}^{*}T_{1}-U_{o}Q \\
 Y
 \end{array}\right] 
 \Big \Vert _{\infty} &< \beta
 \end{align*} 
 From lemma (\ref{lema2}), 
 \begin{align}
 \Vert Y \Vert_{\infty} & < \beta \\
 \label{spectfactineq}
 \Vert U_{i}^{*}T_{1}Y_{o}^{-1}-U_{o}QY_{o}^{-1} \Vert_{\infty} & < 1
 \end{align}
 
Inequality (\ref{spectfactineq}) implies dist $(R,\ U_{o} \ \stabS  Y_{o}^{-1})<1$. But, $U_{o}$ is right invertible in $\stabS$ and $Y_{o}$ is invertible in $\stabS$. Therefore, $U_{o} \ \stabS  Y_{o}^{-1} = \stabS$ and hence dist $(R,\stabS) <1$. 

From $\Vert Y \Vert_{\infty} < \beta$, dist $(R,\stabS) <1$ and definition of $\beta_{inf}$, it can be concluded that $\beta_{inf} \leq \beta$. Thus, $\beta_{inf} < \gamma+\epsilon$. Since $\epsilon$ is arbitrary, $\beta_{inf} \leq \gamma$.   
 
Now for the reverse inequality, select $\epsilon>0$ and then choose $\beta$ such that $\beta_{inf}+\epsilon>\beta>\beta_{inf}$. Then, $\Vert Y \Vert_{\infty} < \beta$, dist $(R,\stabS) <1$ hold. So, $\Vert U_{i}^{*}T_{1}Y_{o}^{-1}-U_{o}QY_{o}^{-1} \Vert_{\infty}<1$ also holds for some $Q$ in $\stabS$. From lemma (\ref{lema2}),
 \begin{align*}
 \Big \Vert  \left[\begin{array}{rr}
 U_{i}^{*}T_{1}-U_{o}Q \\
 Y
 \end{array}\right] 
 \Big \Vert _{\infty} \leq \beta
 \end{align*} 
 This results in $\Vert T_{1}-T_{2}Q\Vert_{\infty} \leq \beta$. Therefore, $\gamma \leq \beta < \beta_{inf}$ so $\gamma \leq \beta_{inf}$.
 \end{proof}
 
It is required to find $X$ such that $\Vert R-X\Vert_{\infty} \leq 1$ which gives $Q$ such that $\Vert T_{1}-T_{2}Q\Vert_{\infty} \leq \beta$ and the optimization problem is solved. The algorithm as found in \cite{fran} for computing nearly optimal $Q$ is as given below.
\begin{enumerate}
\item [\textbf {Step 1}] Compute $Y$ and  $\Vert Y \Vert_{\infty}$ where $Y:=(I-U_{i}U_{i}^{*})T_{1}$. $U_{i}$ is the inner factor of $T_{2}$ 
\item [\textbf {Step 2}] Find an upper bound $\gamma_{1}$ for $\gamma$. The simplest bound is $\gamma_{1}=\Vert T_{1}
\Vert_{\infty}$.
\item [\textbf {Step 3}] Select a trial value for $\beta$ in the interval ($\Vert Y \Vert_{\infty}, \gamma_{1}$). Binary search can be used to iterate on $\beta$.
\item [\textbf {Step 4}] Compute $R$ and $\Vert \Gamma_{R} \Vert$. Then $\Vert \Gamma_{R} \Vert < 1$ iff $\gamma < \beta$. Increase or decrease the value of $\beta$ accordingly and return to \textbf {Step 3}. With a sufficiently accurate upper bound for $\gamma$ is obtained, continue to \textbf {Step 5}.
\item [\textbf {Step 5}] Find a matrix $X$ in $\stabS$ such that $\Vert R-X \Vert_{\infty}\leq 1$.
\item [\textbf {Step 6}] Solve for $Q$ in $\stabS$ as $X=U_{o}QY_{o}^{-1}$. It is easier to solve this equation for $Q$ if $U_{o}$ is square than when it is not square.   
\end{enumerate}
The illustrative example solved in the next section shows the design of a compensator for a given network such that the interconnected network is robustly stable under parametric variations.
\section{Illustrative example for robust stabilization problem}
\label{robuststabprobsec}
In this section, an illustrative example is solved to show the design of a robust compensator for a given network in order to achieve the robust stability under parametric variations. To obtain such a compensating network, it is required to solve the robust stabilization problem as posed in section \ref{solutionrobuststab}. 

Consider the single port network as shown in the Fig.\ref{fig03:name} excited by a current source. The impedance for this network can be found using Kirchhoff's laws. 
\begin{figure}[!ht]
\centering
\includegraphics[width=9cm,height=4cm]{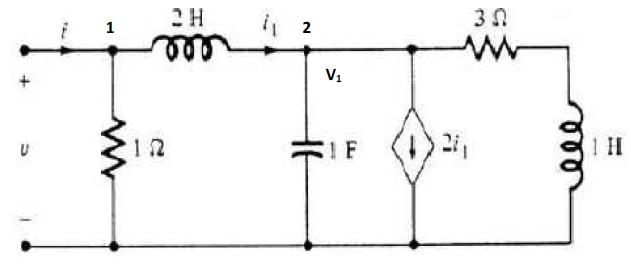}
\caption{Single port network example for robust stabilization }\label{fig03:name}
\end{figure}
Applying KCL at node 1,
\begin{equation}
\label{keqn1}
i=\frac{v}{R_{1}}+\frac{v-v_{1}}{sL_{1}}=\frac{v}{R_{1}}+\frac{v}{sL_{1}}-\frac{v_{1}}{sL_{1}}=\Big(\frac{1}{R_{1}}+\frac{1}{sL_{1}}\Big)v-\Big(\frac{1}{sL_{1}}\Big)v_{1}
\end{equation}
Applying KCL at node 2,
\begin{align}
\label{keqn2}
\frac{v-v_{1}}{sL_{1}}&=\frac{v_{1}}{\Big(\frac{1}{sC_{1}}\Big)}+2i_{1}+\frac{v_{1}}{sL_{2}+R_{2}} \notag \\
&= sC_{1}v_{1} + 2\Big(\frac{v-v_{1}}{sL_{1}}\Big)+\frac{v_{1}}{sL_{2}+R_{2}} \notag \\
\therefore -\frac{v-v_{1}}{sL_{1}}&=\Big(sC_{1}+\frac{1}{sL_{2}+R_{2}}\Big)v_{1}\notag \\
\therefore -\frac{v}{sL_{1}}&=\Big(sC_{1}+\frac{1}{sL_{2}+R_{2}}-\frac{1}{sL_{1}}\Big)v_{1} \notag \\
\therefore v_{1}&=\Big[\frac{-(sL_{2}+R_{2})}{L_{1}L_{2}C_{1}s^{3}+L_{1}C_{1}R_{2}s^{2}+(L_{1}-L_{2})s-R_{2}}\Big]v 
\end{align}
Substituting $v_{1}$ obtained in terms of $v$ from equation (\ref{keqn2}) in equation (\ref{keqn1}), following equation can be obtained.
\begin{align}
\label{kequn3}
i&=\Big(\frac{1}{R_{1}}+\frac{1}{sL_{1}}\Big)v-\Big(\frac{1}{sL_{1}}\Big)\Big[\frac{-(sL_{2}+R_{2})}{L_{1}L_{2}C_{1}s^{3}+L_{1}C_{1}R_{2}s^{2}+(L_{1}-L_{2})s-R_{2}}\Big]v  \notag \\
\therefore i&= \Big[\frac{sL_{1}+R_{1}}{sL_{1}R_{1}}+\frac{sL_{2}+R_{2}}{sL_{1}(L_{1}L_{2}C_{1}s^{3}+L_{1}C_{1}R_{2}s^{2}+(L_{1}-L_{2})s-R_{2})}\Big]v 
\end{align}
The admittance $Y$, of the single port network after simplifying equation (\ref{kequn3}), is as given below.
\begin{align}
\label{kequn4}
Y(s)&=\frac{I(s)}{V(s)}=\frac{L_{1}L_{2}C_{1}s^3+(L_{1}C_{1}R_{2}+L_{2}C_{1}R_{1})s^2+(L_{1}-L_{2}+C_{1}R_{1}R_{2})s+(R_{1}-R_{2})}{R_{1}L_{1}L_{2}C_{1}s^3+R_{1}R_{2}L_{1}C_{1}s^2+R_{1}(L_{1}-L_{2})s-R_{1}R_{2}}  
\end{align}
The impedance $Z$, of the single port network can be obtained by taking the reciprocal of the equation (\ref{kequn4}) which is as given below.  
\begin{align}
Z(s)&=\frac{V(s)}{I(s)}=\frac{R_{1}L_{1}L_{2}C_{1}s^3+R_{1}R_{2}L_{1}C_{1}s^2+R_{1}(L_{1}-L_{2})s-R_{1}R_{2}}{L_{1}L_{2}C_{1}s^3+(L_{1}C_{1}R_{2}+L_{2}C_{1}R_{1})s^2+(L_{1}-L_{2}+C_{1}R_{1}R_{2})s+(R_{1}-R_{2})}
\end{align}
At nominal parameter values $R_{1}=1 \Omega, R_{2}=3 \Omega, L_{1}=2H, L_{2}=1H $ and $C_{1}=1F$, following are the expressions for $Z(s)$ and $Y(s)$.
\begin{align}
Y(s)&=\frac{2s^{3}+7s^2+4s-2}{2s^{3}+6s^2+s-3}=\frac{(s+2.5707)(s+1.2424)(s-0.3131)}{(s+2.5811)(s+1)(s-0.5811)} \\
Z(s)&=\frac{2s^{3}+6s^2+s-3}{2s^{3}+7s^2+4s-2}=\frac{(s+2.5811)(s+1)(s-0.5811)}{(s+2.5707)(s+1.2424)(s-0.3131)}
\end{align}

The pole-zero plot of $Z(s)$ at nominal parameter values is as shown in the Fig.\ref{fig06b:name}. The left half plane pole (zero) of $Y(s)$ ($Z(s)$) at $s=-2.5811$ and zero (pole) of $Y(s)$ ($Z(s)$) at $s=-2.5707$ are located approximately at the same location. Hence, they can be cancelled so as to obtain the minimal realization for $Y(s)$ and $Z(s)$) which are, respectively given by the following equations. 
\begin{figure}[!ht]
\centering
\includegraphics[width=15.0cm,height=10cm]{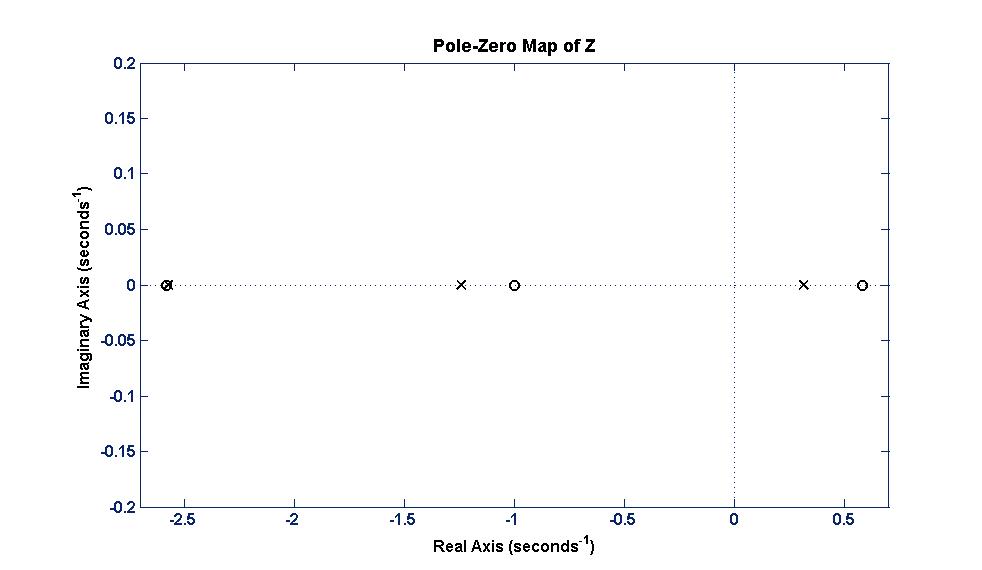}
\caption{Pole-zero plot of $Z$}\label{fig06b:name}
\end{figure}
\begin{align}
Y(s)&=\frac{s^2+0.9293s-0.389}{s^2+0.4189s-0.5811}=\frac{(s+1.2424)(s-0.3131)}{(s+1)(s-0.5811)} \\
\label{impedance}
Z(s)&=\frac{s^2+0.4189s-0.5811}{s^2+0.9293s-0.389}=\frac{(s+1)(s-0.5811)}{(s+1.2424)(s-0.3131)}
\end{align}
A model of the compensating network shall be determined which under parametric variations of the given network ensures the stability of the interconnected network.

The given one port network at nominal parameter values is neither open circuit stable as the pole at $s=0.3131$ of its impedance function $Z(s)$ lies in open RHP nor short circuit stable as the pole at $s=0.5811$ of its admittance function $Y(s)$ lies in open RHP. A compensating network can be designed so as to make the resultant network open circuit stable under the variations in parameters. For this, consider that $\pm 5\%$ variations occur in the parameters of the given single port network.  

The stabilizing compensator $Y_{c}=Z_{c}^{-1}$ is a parallel admittance and set of all such stabilizing $Z_{c}$ has parametrization as $Z_{c}=(Y+QN)(X-QD)^{-1}$ where $Q$ is an arbitrary element of $\stabS$ such that $X-QD\neq0$. It is required to find $Q$ such that the interconnected network is robustally stable under parametric variations.

In this case, the port relation is $V=ZI$. Hence, $H=Z$, $H^{-1}=Y=1/Z$. $H$ is represented by coprime fractions, $Z=ND^{-1}$ where $N,D$ are in $\stabS$. Using the coprime factors,$N, D$ of $Z$, it is possible to obtain two rational functions $X$ and $Y$ such that $NX+DY=1$ holds. The source series compensation is provided by an admittance $Y_{c}=1/Z_{c}$ connected in parallel with $Y$. 
Such a compensator has coprime representation $Z_{c}=N_{c}D_{c}^{-1}$. \\ 
The coprime fractions $N$ and $D$ for $Z$ as given by equation (\ref{impedance}) at nominal parameter values are as below.
\begin{align}
N(s)&=
\frac{s^{2}+0.4189s-0.5811}{s^{2}+3s+2} = \frac{s-0.5811}{s+2} \\
D(s)&=\frac{s^2+0.9293s-0.389}{s^2+3s+2}=
\frac{(s+1.2424)(s-0.3131)}{(s+1)(s+2)}
\end{align}
The solutions to Bezout's identity $XN+YD=1$ give $Y(s)$ and $X(s)$ as below.
\begin{align}
X(s)&=
\frac{-16.9s-20.92}{s^{2}+3s+2} \\
Y(s)&=
\frac{s+20.97}{s+2}
\end{align}

The solution to robust stabilization problem involves finding $R(j\omega)$ which represents an uppermost bound on uncertainties represented by $\Delta$. To find this uppermost bound, it is required to consider the network functions with parametric variations.  Consider $\pm 5\%$ variations occur in the parameters of this network. For each set of variations, $\tilde Z$ and its respective coprime fractions $\tilde N$ and $\tilde D$ needs to be computed. With $N$ and $D$ as coprime fractions at nominal parameter values and $\tilde N$ and $\tilde D$ as coprime fractions at each set of parameter variations, following matrix can be formed.   
\begin{equation}
\label{perturbmatrix}
\left[\begin{array}{rr}
\tilde N-N\\
\tilde D-D  
\end{array}\right]
\end{equation}

There are five parameters and each of these parameters can take $+5\%$ and $-5\%$ apart from its nominal value. Thus, each of these parameters undergoes three possible states ($-5\%$, nominal value and $+5\%$) which forms $3^{5}=243$ possible combinations of the perturbed network impedances and corresponding coprime fractions for each need to be computed.

In order to obtain the uppermost bound, it is required to compute $\infty$ norm of each matrix as given by the equation (\ref{perturbmatrix}) where $\tilde N$ represents the coprime fraction of the perturbed network impedance for each of the parameter variations while $N$ represents the coprime fraction of the network impedance with nominal parameter values. Thus, $3^{5}-1=243-1=242$ matrices are required to be considered for computing the $\infty$ norm at various frequency points wherein the robust stabilization problem needs to be solved. Calculations for one of the variations in the parameters out of different possible combinations of parametric variations is shown here as an illustration.\\
In the network function $Z$, let each of the parameter is undergoing a variation of say $+5\%$ from its nominal value so that  $\tilde R_{1}=1.05 \Omega, \tilde R_{2}=3.15 \Omega, \tilde L_{1}=2.1H, \tilde L_{2}=1.05H $ and $\tilde C_{1}=1.05F$, following expression for perturbed impedance, $\tilde Z(s)$ is obtained.
\begin{align}
\tilde Z&=\frac{2.4310s^{3}+7.2930s^2+1.1025s-3.3075}{2.3152s^{3}+8.1033s^2+4.5228s-2.1} =\frac{1.05(s+2.631)(s+0.9268)(s-0.5579)}{(s+2.624)(s+1.171)(s-0.2951)} \notag \\ 
\label{tildeimpedance}
\therefore \tilde Z&=\frac{1.05(s+0.9268)(s-0.5579)}{(s+1.171)(s-0.2951)}
\end{align} \\
The coprime fractions $\tilde N$ and $\tilde D$ for $\tilde Z$ as given by equation (\ref{tildeimpedance}) for $+5\%$ parameter variation are as below.
\begin{align}
\tilde N &=
\frac{1.05s^{2}+0.3874s-0.543}{s^{2}+3s+2} = \frac{1.05(s+0.9268)(s-0.5579)}{(s+1)(s+2)} \\
\tilde D &=\frac{s^{2}+0.8759s-0.3456}{s^{2}+3s+2}=
\frac{(s+1.171)(s-0.2951)}{(s+1)(s+2)}
\end{align}
The solutions to Bezout's identity $\tilde X \tilde N+\tilde Y \tilde D=1$ give $\tilde Y$ and $\tilde X$ as below.
\begin{align}
\tilde X&=
\frac{-17.9s-20.92}{s^{2}+3s+2} = \frac{-17.9(s+1.1687)}{(s+1)(s+2)}\\
\tilde Y&=\frac{s^{2}+23.92s+21.29}{s^{2}+3s+2}=\frac{(s+22.99)(s+0.9259)}{(s+1)(s+2)}
\end{align}
Thus, following equations are obtained.
\begin{align}
\tilde N-N &=\frac{0.05(s^{2}-0.632s+0.765)}{(s+1)(s+2)}\\
\tilde D-D&=\frac{-0.053s+0.0433}{s^{2}+3s+2}=\frac{-0.05343(s-0.817)}{(s+1)(s+2)}
\end{align}
Using above equations, following matrix can be obtained.
\begin{equation}
\label{perturbmatrixexample}
\left[\begin{array}{rr}
\tilde N-N\\
 \tilde D-D  
\end{array}\right]=
\left[\begin{array}{rr}
\frac{0.05(s^{2}-0.632s+0.765)}{(s+1)(s+2)}\\ \\
\frac{-0.05343(s-0.817)}{(s+1)(s+2)}
\end{array}\right]
\end{equation}

In the matrix given by equation (\ref{perturbmatrixexample}), various frequencies over which robust stabilization problem is required to be solved can be substituted and at each frequency, infinity norm of the matrix can be computed. The variation of this infinity norm with frequency can then be plotted. 

For each of the parameter variations, it is required to compute first the matrix as given by equation (\ref {perturbmatrix}) and then its infinity norm for various frequency points. The variation of this infinity norm with frequency for each of the matrix can then be plotted. Such a plot is as shown in the Fig.\ref{fig05:name}.
\begin{figure}[!ht]
\centering
\includegraphics[width=15.8cm,height=9cm]{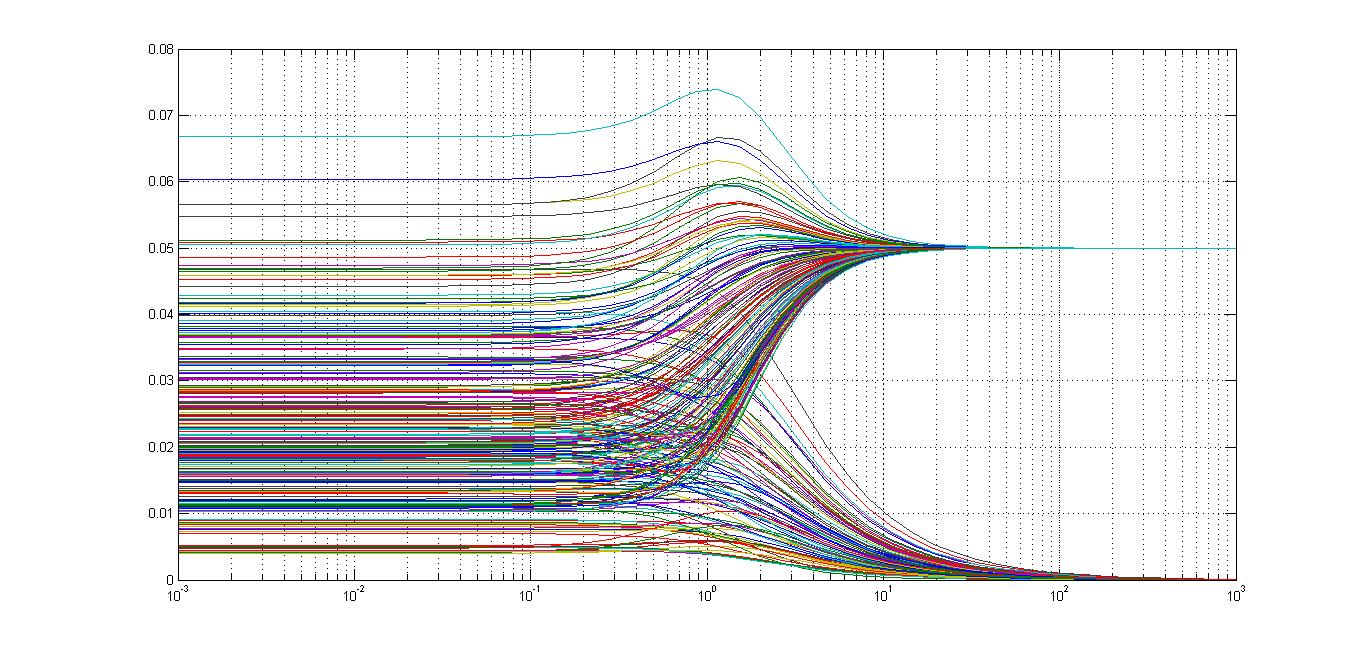}
\caption{Variation of infinity norms of perturbed matrices (absolute value) with frequencies}\label{fig05:name}
\end{figure} 
The same plot with db values plotted on Y-axis is as shown in the Fig.\ref{fig06:name}.
\begin{figure}[!ht]
\centering
\includegraphics[width=16.2cm,height=9.4cm]{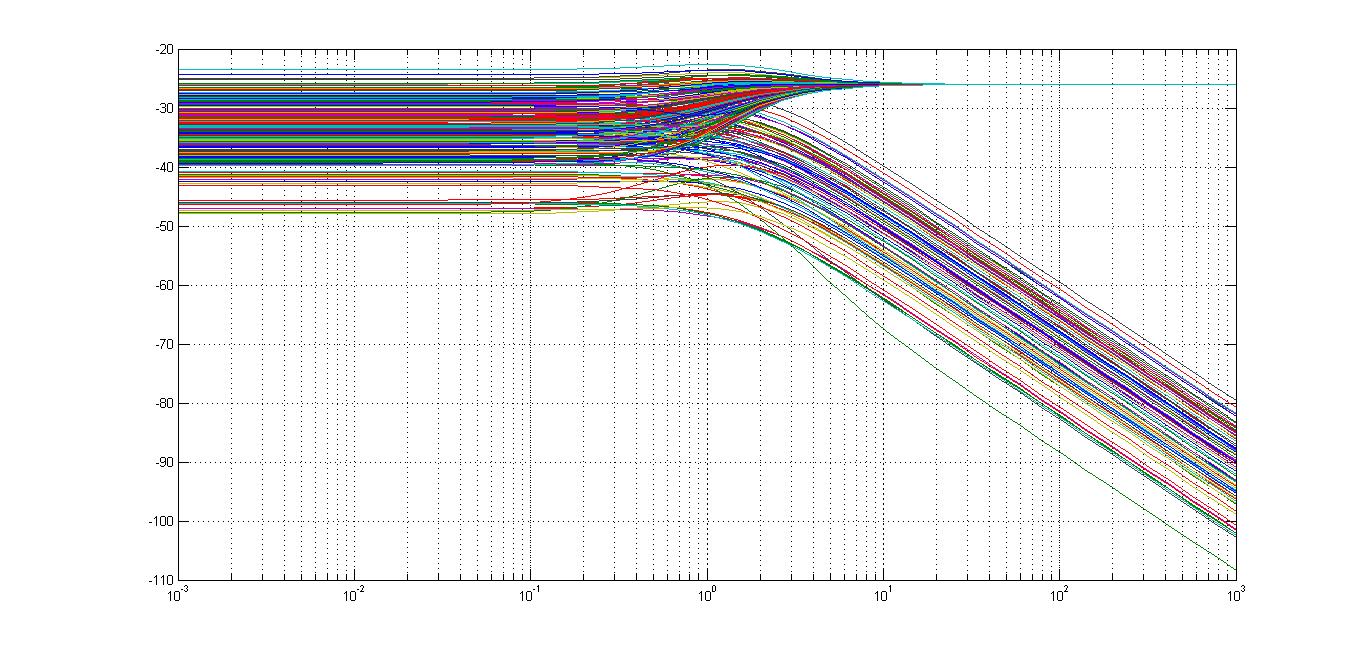}
\caption{Variation of infinity norms of perturbed matrices (in db) with frequencies}\label{fig06:name}
\end{figure} 
From the family of loci, the uppermost bound $R(j\omega)$ can be computed.

All the other functions will lie below this uppermost bound. Such a function $R(s)$ can be found and is given by the following equation with its Bode magnitude plot as shown in the Fig.\ref{fig06c:name}.
\begin{equation}
R(s)=\frac{0.06(s+6)}{s+3.4}    
\end{equation} 
\begin{figure}[!ht]
\centering
\includegraphics[width=16.2cm,height=9.4cm]{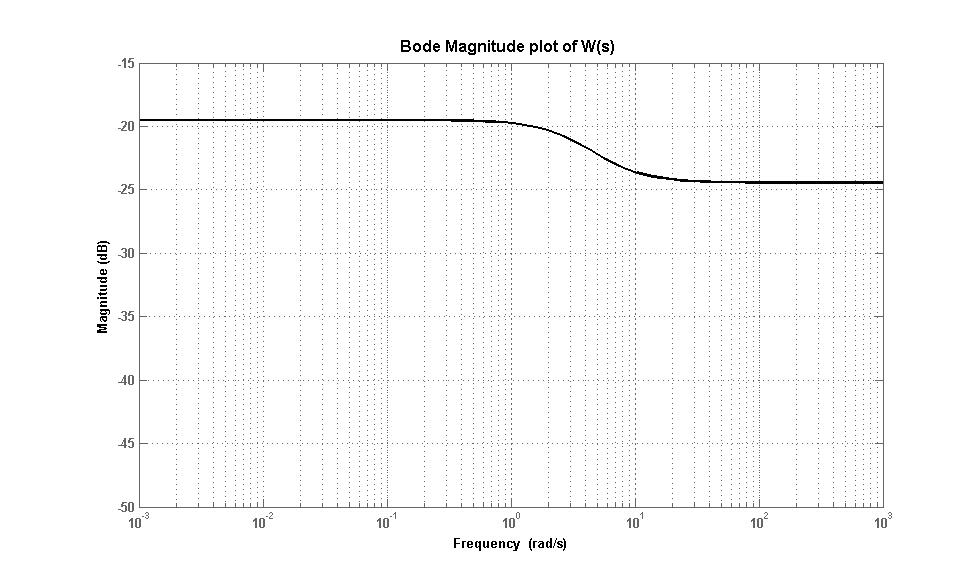}
\caption{Bode magnitude plot of $R(s)$}\label{fig06c:name}
\end{figure} \\
As seen from equation (\ref{robuststabfor}),
\begin{equation}\label{robuststabequnpara1}
T_{1}=R
\left[\begin{array}{rr}
 X_{r}\\
 Y_{r}  
\end{array}\right]=\frac{0.06(s+6)}{s+3.4}
\left[\begin{array}{rr}
\frac{-16.9s-20.92}{s^{2}+3s+2}\\
\frac{s+20.97}{s+2}  
\end{array}\right]=\left[\begin{array}{rr}
\frac{-1.014(s+1.238)(s+6)}{(s+1)(s+2)(s+3.4)}\\
\frac{0.06(s+20.97)(s+6)}{(s+2)(s+3.4)}  
\end{array}\right]
\end{equation}
\begin{equation}\label{robuststabequnpara2}
T_{2}=R
\left[\begin{array}{rr}
 D_{r}\\
 -N_{r}  
\end{array}\right]=
\left[\begin{array}{rr}
\frac{0.06(s+1.242)(s+6)(s-0.3131)}{(s+1)(s+2)(s+3.4)}\\
\frac{-0.06(s-0.5811)(s+6)}{(s+2)(s+3.4)}  
\end{array}\right]
\end{equation}
The robust stabilization problem can now be solved iff there a solution to the following optimization problem
\beq
\min_{Q\in M(\stabS)} \Vert T_{1}-T_{2}Q\Vert_{\infty}<1
\eeq

Converting this problem to Nehari problem which can be posed as for a given $R$ in $RL_{\infty}$ with dist $(R,\stabS)<1$, find all $X's$ in $\stabS$ such that $\Vert R-X \Vert_{\infty}\leq 1$ where $R=U_{i}^{*}T_{1}Y_{o}^{-1}$. Applying the steps in the algorithm from \cite{fran}, nearly optimal $Q$ can be computed. First, the inner-outer factorization of $T_{2}$ can be computed as $T_{2}=T_{2i}T_{2o}=U_{i}U_{0}$.
\begin{align}
U_{i}&=\left[\begin{array}{rr}
\frac{0.70711(s+1.242)(s-0.3131)}{(s+1.141)(s+0.4334)}\\
\frac{-0.70711(s-0.5811)(s+1)}{(s+1.141)(s+0.4334)}  
\end{array}\right]\\
U_{o}&=\frac{0.084853(s+0.4334)(s+1.141)(s+2)(s+3.4)(s+6)}{(s+1)(s+2)^{2}(s+3.4)^{2}}
\end{align}
As discussed in the \textbf{step 1} of the algorithm for solving the robust stabilization problem in matrix case, it is required to compute $RL_{\infty}$ matrix $Y$ which is given by, $Y=(I-U_{i}U_{i}^{*})T_{1}$.
\begin{equation}
Y=\left[\begin{array}{rr}
\frac{0.03(s+1)(s+0.5811)(s-1)(s+6)(s^{2}+4s+4.001)}{(s+1.141)(s-1.141)(s+0.4334)(s-0.4334)(s+2)(s+3.4)}\\
\frac{0.03(s+1)(s+6)(s+0.3131)(s-1.242)(s^{2}+4s+4.001)}{(s+1.141)(s-1.141)(s+0.4334)(s-0.4334)(s+2)(s+3.4)}  
\end{array}\right]   
\end{equation}

From above equation, $\Vert Y \Vert_{\infty}=0.3028$. Next, an upper bound $\gamma_{1}$ for $\gamma$ can be found which is $\gamma_{1}=\Vert T_{1} \Vert_{\infty}=1.5682$. Thus, $\beta$ lies in the interval ($\Vert Y \Vert_{\infty},\gamma_{1})=(0.3028,1.5682)$. For selecting a trial value for $\beta$ in this interval, binary search can be used. 

If $\beta > \Vert Y \Vert_{\infty}$, then the matrix $\beta^{2}-Y^{*}Y$ has a spectral factor $Y_{o}$ where $Y^{*}$ is complex conjugate transpose of the matrix $Y$. Let $R$ be $RL_{\infty}$ matrix defined as $R:=U_{i}^{*}T_{1}Y_{o}^{-1}$. $R$ depends on the value of $\beta$ selected. It is now required to find a closest matrix $X$ in $\stabS$ to a given matrix $R$ in $L_{\infty}$. The table \ref{table:1} below shows the variation of $\Vert\Gamma_{R}\Vert$ for different values of $\beta$.

\begin{table}[!ht]
\centering
\begin{tabular}{|c|c|}
\hline
$\beta$ & $\Vert\Gamma_{R}\Vert$ \\
\hline
0.9355 & 1.3004 \\
1.2518 & 0.9593 \\
1.0937 & 1.1035 \\
1.1728 & 1.0263 \\
1.2123 & 0.9916 \\
\hline
\end{tabular}
\caption{Table showing $\Vert\Gamma_{R}\Vert$ for various values of $\beta$ }
\label{table:1}
\end{table}

For each value of selected $\beta$, $R$ and $\Vert\Gamma_{R}\Vert$ is computed. For $\beta=1.2123, \Vert\Gamma_{R}\Vert=0.9916$ so this value of $\beta$ can be considered for which, ($\beta^{2}-Y^{*}Y$) has a spectral factor $Y_{o}$ which is as given below.
\begin{equation}
Y_{o}=\frac{1.2116(s+3.397)(s+1.14)(s+0.4208)}{(s+0.4334)(s+1.141)(s+3.4)}
\end{equation}
From the spectral factor $Y_{o}$, $R=U_{i}^{*}T_{1}Y_{o}^{-1}$ can be easily computed which is given by the following equation.
\begin{equation}
R=\frac{-0.03501(s+37.99)(s+6)(s+1.141)(s+1.139)(s+0.433)(s+0.4226)(s-1.101)}{(s-1.141)(s-0.4334)(s+1.14)(s+1)(s+2)(s+3.397)(s+0.4208)}
\end{equation}
Since $R$ is a scalar valued function, following theorem as found in \cite{fran} can be applied to find the closest function $X$ in $\stabS$.
\begin{theorem}
The infimal model matching error in model matching error equals  $\Vert \Gamma_{R} \Vert$, the unique optimal $X$ equals $R-\gamma (\frac{f}{g})$ and for the optimal $Q$, $T_{1}-T_{2}Q$ is all pass. 
\end{theorem}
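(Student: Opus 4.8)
The plan is to treat the stated result as the scalar specialization of the Adamyan--Arov--Krein theory, building on Nehari's theorem already recalled above. Throughout, write $\gamma:=\Vert\Gamma_{R}\Vert$ for the Hankel norm of the scalar symbol $R\in RL_{\infty}$. Since $R$ is rational, $\Gamma_{R}$ is a finite rank, hence compact, operator, so its norm is attained as the largest singular value and there is a maximal \emph{Schmidt pair} $(f,g)$ with $f\in H_{2}$, $g\in H_{2}^{\perp}$ satisfying $\Gamma_{R}f=\gamma g$ and $\Gamma_{R}^{*}g=\gamma f$. The first assertion, that the infimal model matching error equals $\gamma$, is precisely Nehari's theorem: for any $X\in\stab$ one has $\Gamma_{R-X}=\Gamma_{R}$ since the stable summand is annihilated by the Hankel operator, whence $\Vert R-X\Vert_{\infty}\ge\Vert\Gamma_{R-X}\Vert=\gamma$, giving the lower bound, while the converse inequality and the attainment are exactly the existence statement of Nehari's theorem quoted earlier. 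This disposes of part (1).

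For the explicit optimal approximant in part (2), I would set $X:=R-\gamma(f/g)$ and verify the three properties that characterise the optimum: that the error $E:=R-X=\gamma(f/g)$ lies in $RL_{\infty}$, that $\vert E(j\omega)\vert=\gamma$ almost everywhere, and that $X\in\stab$. The modulus claim is the crux. From $\Gamma_{R}f=\gamma g$ together with $\Gamma_{R}^{*}g=\gamma f$ one extracts the pointwise identity $\vert f(j\omega)\vert=\vert g(j\omega)\vert$ on the imaginary axis, so that $f/g$ is all pass and $\vert E\vert\equiv\gamma$. Stability of $X$ follows from the same Schmidt relation read as $Rf-\gamma g\in H_{2}$: this says that the right half plane poles carried by $R$ are cancelled once we subtract $\gamma(f/g)$, so $X$ is analytic in the closed RHP, and properness is inherited from $R$; hence $X\in\stab$. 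Uniqueness is then a rigidity statement: any optimal error must be all pass of the exact modulus $\gamma$ and must reproduce the antistable part of $R$, and in the scalar case these two constraints pin down $E$, and therefore $X$, uniquely.

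Finally, for the all pass assertion in part (3), I would transport the flatness of the optimal error back through the reductions that produced $R$. Pulling the inner factor out of the norm, $\Vert T_{1}-T_{2}Q\Vert_{\infty}=\Vert U_{i}^{*}T_{1}-U_{o}Q\Vert_{\infty}$, because $\vert U_{i}(j\omega)\vert=1$; in the present scalar setting $U_{o}$ is biproper with stable inverse, so the variable $U_{o}Q$ ranges over all of $\stab$ and the problem becomes the scalar Nehari problem with symbol $U_{i}^{*}T_{1}$, whose unique optimal error is all pass of constant modulus $\gamma$ by parts (1)--(2). Writing $T_{1}-T_{2}Q=U_{i}\,(U_{i}^{*}T_{1}-U_{o}Q)$ and using $\vert U_{i}\vert=1$ once more shows $\vert T_{1}(j\omega)-T_{2}(j\omega)Q(j\omega)\vert$ is constant in $\omega$, i.e. $T_{1}-T_{2}Q$ is all pass at the optimal $Q$. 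The spectral factor $Y_{o}$ employed in the algorithm only rescales the problem to locate this optimal value and to recover $Q$ through $X=U_{o}QY_{o}^{-1}$; it does not alter the all pass character of the optimal error.

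The step I expect to be the main obstacle is establishing the pointwise modulus equality $\vert f\vert=\vert g\vert$ of the maximal Schmidt pair on the imaginary axis, which simultaneously delivers the all pass property of the error and, through the companion identity $Rf-\gamma g\in H_{2}$, the stability $X\in\stab$ of the explicit approximant; everything else is either a restatement of Nehari's theorem or routine bookkeeping with the inner--outer and spectral factorizations.
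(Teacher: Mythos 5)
First, a point of orientation: the paper itself contains no proof of this theorem. It is imported verbatim from \cite{fran}, and what the paper reproduces is only the computational recipe (the gramians $L_{c},L_{o}$, the eigenpair $\lambda^{2},w$, the vector $v$, and the state-space definitions of $f$ and $g$). So your proposal has to be measured against the argument in the cited reference, which is the classical scalar Nehari/AAK argument --- and that is indeed the family of argument you chose. The framework is right, but your sketch has genuine gaps. The most concrete one is the orientation of the Schmidt pair. You posit $f\in H_{2}$, $g\in H_{2}^{\perp}$ with $\Gamma_{R}f=\gamma g$, $\Gamma_{R}^{*}g=\gamma f$, and then declare the optimal error to be $E=\gamma(f/g)$. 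Under \emph{your} convention the optimal error is $\gamma\,g/f$, not $\gamma\,f/g$: for any optimal $X\in\stabS$ one has $(R-X)f=P_{-}\big((R-X)f\big)=\Gamma_{R}f=\gamma g$. The theorem's formula $X=R-\gamma(f/g)$ is written in the convention of \cite{fran}, where $f:=C(sI-A)^{-1}w$ has all its poles in the open RHP (the realization matrix $A$ is antistable), so that $f\in H_{2}^{\perp}$ and $g\in H_{2}$ --- exactly opposite to yours. This is not merely notational: your stability deduction (``$Rf-\gamma g\in H_{2}$, hence $X=R-\gamma(f/g)$ is analytic in the closed RHP'') is a non sequitur as written, since $Rf-\gamma g\in H_{2}$ pertains to the quotient $g/f$, and even then one must still rule out uncancelled RHP zeros of the denominator, a nontrivial step in the AAK argument that you pass over.

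Second, the step you yourself flag as the ``main obstacle'' --- the pointwise equality $\vert f(j\omega)\vert=\vert g(j\omega)\vert$ --- is simply left unproven, and your all-pass claim, your stability claim, and your ``rigidity'' uniqueness claim all lean on it. There are two standard ways to close this, and your sketch contains neither. (i) In the rational setting of the paper, \cite{fran} verifies $f(-s)f(s)=g(-s)g(s)$ by direct computation from the Lyapunov equations together with $L_{c}v=\lambda w$ and $L_{o}w=\lambda v$. (ii) Better, one can avoid needing it at all: Nehari guarantees an optimal $X\in\stabS$ exists; for any such $X$ the chain $\gamma\Vert f\Vert_{2}=\Vert\Gamma_{R}f\Vert_{2}=\Vert P_{-}\big((R-X)f\big)\Vert_{2}\leq\Vert(R-X)f\Vert_{2}\leq\Vert R-X\Vert_{\infty}\Vert f\Vert_{2}=\gamma\Vert f\Vert_{2}$ forces $(R-X)f=\gamma g$ and $\vert R(j\omega)-X(j\omega)\vert=\gamma$ a.e.\ (since the rational $f$ vanishes only on a null set). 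This single computation delivers, in one stroke, the explicit formula for the error, its uniqueness (no separate rigidity argument), stability of the formula for free (it coincides with the Nehari-optimal stable approximant, whose existence is already granted), and the flatness that you then correctly transport through the scalar inner factor to conclude $T_{1}-T_{2}Q$ is all pass. Without (i) or (ii), parts two and three of the theorem remain unproven in your proposal; with (ii), the proof is actually shorter than what you outline.
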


The algorithm to compute $X$, when $R$ is scalar valued, can be found in \cite{fran}. It  is given below for reference and  applicable for this example.
\begin{enumerate}
\item [\textbf {Step 1}] Factor $R$ as $R=R_{1}+R_{2}$ where $R_{1}$ is strictly proper and analytic in Re s $\leq$ 0 while $R_{2}$ belongs to $\stabS$. $R_{1}$ has a minimal state space realization with $A,B,C$ as its matrices and matrix $A$ is antistable. 
\item [\textbf {Step 2}] Solve the following equations for $L_{c}$ and $L_{o}$.
\begin{align}
AL_{c}+L_{c}A^{T}&=BB^{T} \notag \\
A^{T}L_{o}+L_{o}A&=C^{T}C \notag
\end{align}
\item [\textbf {Step 3}] Find the maximum eigenvalue $\lambda^{2}$ of $L_{c}L_{o}$ and a corresponding eigenvector $w$. Defining $v:\lambda^{-1}L_{o}w$ so that from equation $L_{c}L_{o}w=\lambda^{2}w$, following pair of equations can be obtained. Solve the equation for $v$.
\begin{align*}
L_{c}v=\lambda w  \\
L_{o}w=\lambda v
\end{align*}
\item [\textbf {Step 4}] Define the real rational functions $f(s)$ and $g(s)$ as shown below.
\begin{align}
f(s)&:=
\left[\begin{array}{c|c}
A & w \\
\hline
C & 0
\end{array}\right]=C(sI-A)^{-1}w \notag \\
g(s)&:=
\left[\begin{array}{c|c}
-A^{T} & v \\
\hline
B^{T} & 0
\end{array}\right]=B^{T}[sI-(-A^{T})]^{-1}v \notag
\end{align}
\item [\textbf {Step 5}] Set the model matching error $\gamma=\lambda$ so that $X=R-\gamma (\frac{f}{g})$.
\item [\textbf {Step 6}] Solve for $Q$ in $\stabS$ using $X$ as obtained in \textbf{step 5}.    
\end{enumerate}

Using these steps from the algorithm, $X$ and hence $Q$ as a solution to the robust stabilization problem can be computed. From \textbf{step 1}, $R$ can be factored as $R=R_{1}+R_{2}$. $R_{1}$ and $R_{2}$ are respectively given by following equations.
\begin{align}
R_{1}&=\frac{-0.090464(s-1.09)}{(s-0.4334)(s-1.141)} \\
R_{2}&=\frac{-0.035018(s+1.14)(s+1.097)(s+0.4206)(s^{2}+19.11s+121)}{(s+0.4208)(s+1)(s+1.14)(s+2)(s+s+3.397)}
\end{align}
$R_{1}$ has a minimal state space realization with matrix $A$ antistable. The state space realization of $R_{1}$ is as given below.
\begin{align}
A&=\left[\begin{array}{rr}
0.4334 & 0.8104\\
0 & 1.1410 
\end{array}\right] \notag \\
B&= \left[\begin{array}{rr}
0 \\
1 
\end{array}\right]\notag \\
C&= \left[\begin{array}{rr}
0.7331 & -0.9046\\
\end{array}\right]\notag
\end{align}
The Lyaupnov equations given in \textbf{step 2} can be solved so as to obtain the controllability and observability grammians which are as given below.
\begin{align}
L_{c}&=\left[\begin{array}{rr}
0.4218 & -0.2256\\
-0.2256 & 0.4382 
\end{array}\right] \notag \\
L_{o}&= \left[\begin{array}{rr}
0.6201 & -0.7405\\
-0.7405 & 0.8846 
\end{array}\right]\notag 
\end{align}
The largest eigenvalue of $L_{c}L_{o}$ denoted as $\lambda^{2}$ is $0.9833$ and the corresponding eigenvector $w$ such that $L_{c}L_{o}w=\lambda^{2}w$ holds is as given below,
\begin{equation}
w= \left[\begin{array}{rr}
0.6782 \\
-0.7349 
\end{array}\right] \notag
\end{equation}
$v$ which is defined as $\lambda^{-1}L_{o}w$ can be computed by solving the equation $L_{o}w=\lambda v$ and is as given below.
\begin{equation}
v= \left[\begin{array}{rr}
0.9729 \\
-1.1620 
\end{array}\right] \notag
\end{equation}
The real rational functions $f(s)$ and $g(s)$ as defined in \textbf{step 4} can now be computed and are given below.
\begin{align}
f(s)&=\frac{1.162s-1.292}{s^{2}-1.574s+0.4945} \\
g(s)&=\frac{-1.162s-1.292}{s^{2}+1.574s+0.4945}
\end{align}
From \textbf{step 5}, the model matching error $\gamma=\lambda=\sqrt{0.9833}=0.9916$. Using this value of $\lambda$, $X$ can be obtained which is given below.
\begin{equation}
X=R-\gamma \Big(\frac{f}{g}\Big)=\frac{0.95658(s+4.27)(s+1.141)(s+1.141)(s+0.4331)}{(s+1.112)(s+1)(s+2)(s+3.397)}  
\end{equation}
Solving for $Q$ using the equation $X=U_{o}QY_{o}^{-1}$, it is possible obtain $Q$ which is given below.
\begin{equation}
Q=U_{o}^{-1}XY_{o}=\frac{13.6584(s+1.141)(s+0.4208)(s+4.27)}{(s+1.112)(s+6)(s+0.4334)}  
\end{equation}
The compensating network $Z_{c}$ can be expressed in terms of the free parameter $Q$ as below.
\begin{equation}
Z_{c}=\frac{Y+NQ}{X-DQ}  
\end{equation}
Using the expressions for $N,D,X,Y$ and $Q$ already computed, the compensating network $Z_{c}$ for the given network $Z$ can be computed which is given below.
\begin{equation}
\label{compnetwork}
Z_{c}=\frac{-1.0732(s+3.455)(s+1)}{(s+1.231)(s+3.387)}
\end{equation}
The compensating network $Z_{c}$ as given by equation (\ref{compnetwork}) ensures the robust stability of the interconnected network under $\pm 5\%$ variations in the parameters of the given network. 

For this circuit example, it can now be shown that the impedance of the interconnected network $\tilde Z_{T}=(\tilde Z^{-1}+Z_{c}^{-1})^{-1}$ under parametric variations of $\pm 5\%$ is in $\stabS$. Consider the perturbed impedance $\tilde Z$ as given by equation (\ref{tildeimpedance}) with each parameter undergoing a variation of say,$+5\% $, as one of the possible variations. Now, it can be shown that the impedance of the interconnected network $\tilde Z_{T}$ is in $\stabS$.
\begin{align*}
\tilde Z_{T}&=(\tilde Z^{-1}+Z_{c}^{-1})^{-1} \notag \\
&=\Big\{\Big[\frac{1.05(s+0.9268)(s-0.5579)}{(s+1.171)(s-0.2951)}\Big]^{-1} + \Big[\frac{-1.0732(s+3.455)(s+1)}{(s+1.231)(s+3.387)}\Big]^{-1}\Big \}^{-1}\\
\therefore \ \tilde Z_{T}&=\frac{48.5413(s+3.455)(s+1)(s+0.9268)(s-0.5579)}{(s+16.24)(s+2.462)(s+1.462)(s+0.7239)}
\end{align*}
This shows that the perturbed impedance of the interconnected network $\tilde Z_{T}$ for one set of the parameter variations is in $\stabS$. 

Perturbed impedance, $\tilde Z_{T}$ for one set of parameter variations is observed to be in $\stabS$. For other set of the parameter variations, $\tilde Z$ can similarly be computed and it can be shown that the perturbed impedance of the interconnected network $\tilde Z_{T}$ with the compensating network, $Z_{c}$ connected in parallel with $\tilde{Z}$ is also found to be in $\stabS$. The pole-zero plot for $\tilde Z_{T}$ for each set of parametric variations is as shown in the Fig.\ref{fig06a:name} and it can be seen that the all the poles of $\tilde Z_{T}$ lie in the left half plane.
\begin{figure}[!ht]
\centering
\includegraphics[width=15.0cm,height=10cm]{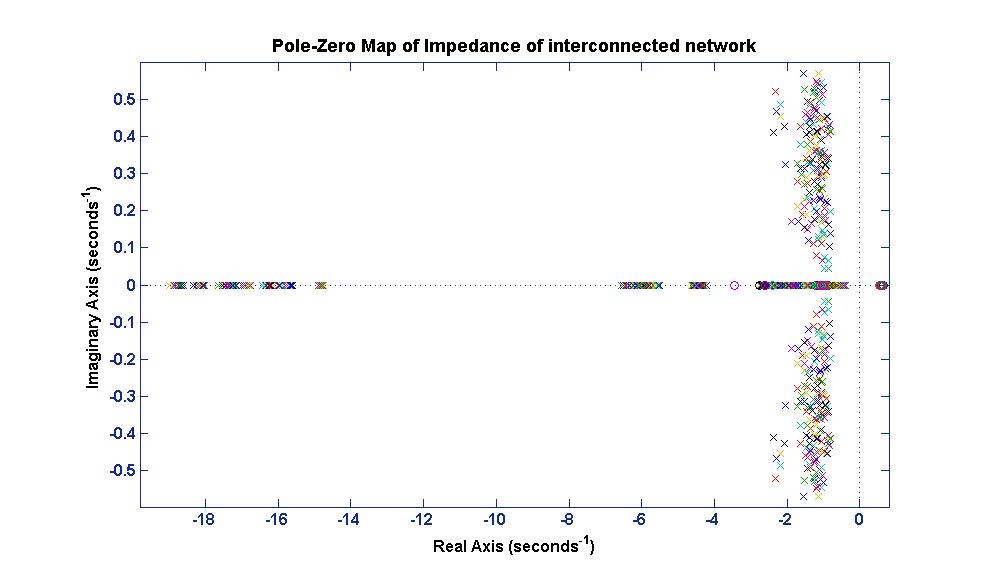}
\caption{Pole-zero plot of $\tilde Z_{T}$ for all parameter variations}\label{fig06a:name}
\end{figure}
Thus, the compensator $Z_{c}$ robustly stabilizes the given network under uncertainties and perturbations due to parameter variations.
\section{Conclusion}
\label{robustconsec} 
Robust stabilization problem in case of active multiport networks is formulated and solved in this chapter. Robustness is an important consideration in multiport interconnections as network undergoes perturbations and uncertainties due to parametric variations, model errors, disturbances and other parasitic effects which are inevitable. It may be possible that the the interconnected network is unable to meet the prescribed performance specifications or even become unstable if robustness is not considered during the design of a compensating network. The solution to robust stabilization problem can be obtained by approaches well known in modern system theory. An illustrative circuit example considered in this chapter shows that the compensating network obtained after solving the robust stabilization problem robustly stabilizes the given network under parametric variations. 


\end{document}